\newcolumntype{L}{>{$}l<{$}} % math-mode version of "l" column type
  \newcolumntype{R}{>{$}r<{$}} % math-mode version of "r" column type
\newcolumntype{C}{>{$}c<{$}} % math-mode version of "r" column type
\newcolumntype{P}[1]{>{\centering\arraybackslash}p{#1}}
\newcolumntype{Y}{>{\centering\arraybackslash}X}
\theoremstyle{remark}
\newtheorem*{remark}{Remark}
\newtheorem*{note}{Note}
\keywords{spatio-temnporal logic, automata construction, monitoring, weighted automata}
\begin{document}

\title{Monitoring Spatially Distributed Cyber-Physical Systems with Alternating Finite Automata}
\author{Anand Balakrishnan}
\email{anandbal@usc.edu}
\affiliation{
  \institution{University of Southern California}
  \city{Los Angeles}
  \country{USA}
}
\author{Sheryl Paul}
\email{sherylpa@usc.edu}
\affiliation{
  \institution{University of Southern California}
  \city{Los Angeles}
  \country{USA}
}
\author{Simone Silvetti}
\email{simone.silvetti@dia.units.it}
\affiliation{
  \institution{University of Trieste}
  \country{Italy}
}
\author{Laura Nenzi}
\email{lnenzi@units.it}
\affiliation{
  \institution{University of Trieste}
  \country{Italy}
}
\author{Jyotirmoy V.~Deshmukh}
\email{jdeshmuk@usc.edu}
\affiliation{
  \institution{University of Southern California}
  \city{Los Angeles}
  \country{USA}
}

\begin{abstract}
  Modern cyber-physical systems (CPS) can consist of various networked
  components and agents interacting and communicating with each other.
  In the context of spatially distributed CPS, these connections can be
  dynamically dependent on the spatial configuration of the various components
  and agents.
  In these settings, robust monitoring of the distributed components is vital
  to ensuring complex behaviors are achieved, and safety properties are
  maintained.
  To this end, we look at defining the automaton semantics for the
  Spatio-Temporal Reach and Escape Logic (STREL), a formal logic designed to
  express and monitor spatio-temporal requirements over mobile, spatially
  distributed CPS.
  Specifically, STREL reasons about spatio-temporal behavior over dynamic
  weighted graphs.
  While STREL is endowed with well defined qualitative and quantitative
  semantics, in this paper, we propose a novel construction of (weighted)
  alternating finite automata from STREL specifications that efficiently
  encodes these semantics.
  Moreover, we demonstrate how this automaton semantics can be used to perform
  both, offline and online monitoring for STREL specifications using
  a simulated drone swarm environment.
\end{abstract}

\maketitle

% !TEX root = ../main.tex

\section{Introduction}

Multi-agent and distributed cyber-physical systems (CPS) are becoming
increasingly pervasive in modern society, from IoT networks \cite{iotcps1, iotcps2,iotcps3}, vehicular
networks \cite{wang2019survey}, power grids
\cite{zhu2020cost,pagani2013power} smart cities \cite{smartcities1,smartcities2,smartcities3}, drone swarms, to
multi-robot configurations used in space exploration \cite{nasa_cadre}.
Individual
nodes (or agents) in such systems communicate and interact with each other, and
the temporal behavior of one node can affect the behavior of other nodes.
A
common approach to model the behavior of such systems is as a time-varying
connectivity graph where individual nodes are considered as stochastic processes
evolving in time.
As such systems become networked at scale, it has become
increasingly important to monitor them to ensure that they perform tasks as
expected while satisfying safety constraints.
For instance, consider the problem
of detecting a real spike (vs.
bad sensor data) for a given bus in a power grid.
A proposed solution in \cite{zhu2020cost} is to compare if some subset of
neighboring busses also display the same spike.
Monitoring such properties
requires reasoning about both the spatial connectivity and temporal behavior of
nodes in the underlying graph representation of the system.

% \textcolor{red}{MORE TEXT NEEDS TO GO HERE} \textcolor{green}{Done- check if ok, wrote motivation for the fig too}
In systems where agents must coordinate tasks or avoid certain areas, spatial
relationships play a critical role, influencing both safety and performance.
For instance, in multi-robot systems, agents may need to stay within a specific
proximity to maintain communication or avoid obstacles in dynamic environments.
Spatio-temporal logics allow designers to specify these constraints and monitor
systems to ensure they conform to such requirements.
For e.g. in ~\ref{fig:example_map3}, we have a simulation in the SwarmLab
environment \cite{soria2020swarmlab} showing a flock of drones that must
navigate through an obstacle-dense environment to reach a central goal.
They must avoid obstacles (indicated by the black polygons), while maintaining
contact with the ground-stations (indicated by the blue regions).
Monitoring such systems requires advanced logic to ensure that agents maintain
connectivity, avoid collisions, and achieve goals efficiently.
\begin{figure}
  \includegraphics[width=0.75\columnwidth]{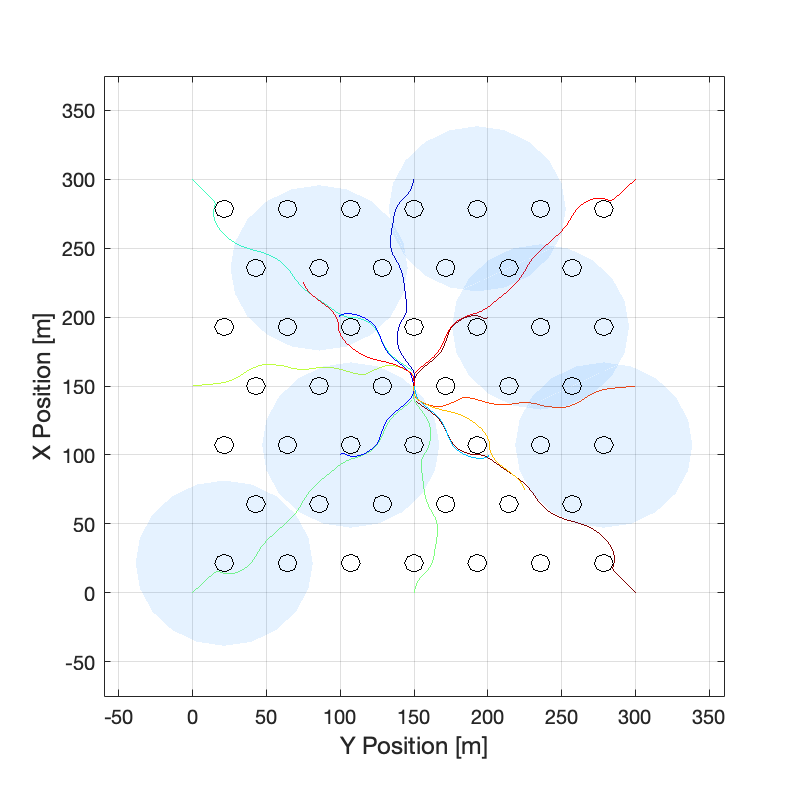}
  \caption{%
    A spatially distributed system of drones coordinating with each other to
    ``flock'' towards a common goal.
    The drones can have requirements to maintain communications with ground stations (blue circles) either directly or through a flock member.
  }\label{fig:example_map3}
\end{figure}
% \textcolor{red}{FIX THIS: 
% To tackle these situations, various spatio-temporal logics have been proposed
% that explicitly reason about the spatial configuration of the system~\cite{Haghighi15_spatel, Nenzi18_SSTL, Ma21, nenzi2022logic,bartocci2017monitoring}.}

In the CPS context, temporal logics have become a popular tool to describe and
monitor for complex spatio-temporal behavior \cite{cpsbook}.
Logics like Metric
Temporal Logic~\cite{mtl1, mtl2, mtl3} and Signal Temporal Logic~\cite{stl1,stl2} (and their
extensions) have been used extensively in this context due to the various
quantitative semantics proposed for them \cite{qtstl1}, in
addition to their qualitative or Boolean semantics.
Recently, there has been a
lot of interest in spatio-temporal logics to explicitly reason about the spatial
configuration of a system.
For example, Signal spatial-temporal logic (SSTL)
\cite{nenzi2015specifying, nenzi2015qualitative} is a modal logic that allows
combining temporal modalities with two spatial modalities, one expressing that
something is true somewhere nearby and the other capturing the notion of being
surrounded by a region that satisfies a given spatio-temporal property.
The
logic SaSTL \cite{sastl} extends STL by introducing operators that allow
spatial aggregation of real-valued quantities at neighbors and spatial counting
across sets of agents.
The logic SpaTel introduces directional spatial operators
by combining STL with the {\it Tree Spatial Superposition
    Logic}~\cite{bartocci2014} (TSSL) and assumes a fixed topology over the space
(for example a checkerboard pattern) \cite{Haghighi15_spatel}.
Most of the
aforementioned logics assume static spatial configurations.

% \textcolor{blue}{Anand? WRITE ABOUT SPATIAL.}

Of particular relevance to this paper is the logic STREL (Spatio-Temporal Reach
and Escape Logic) \cite{nenzi2022logic}.
This logic combines STL with two
spatial operators: reach and escape which allow reasoning about temporal
properties over arbitrary routes in the graph describing the spatial
configuration.
The interesting aspect of STREL is that it allows reasoning about
dynamic spatial configurations, and has been used to model many different kinds
of properties such as bike-sharing systems in smart cities, mobile ad hoc
networks \cite{nenzi2022logic}, and disease spread during a pandemic
\cite{mohammadinejad2021mining}.

In the automata theory literature, the connection between logics and automata is
well-studied.
For example, for every Linear Temporal Logic formula, an
equivalent non-deterministic B\"{u}chi automaton can be constructed in size
exponential in the length of the formula \cite{buchitl}, and can be efficiently constructed \cite{pnueli2008merits}.
Similar constructions exist
for translating computation tree logic (CTL) to weak B\"{u}chi tree automata
\cite{vardi1996automatatheoretic}, to map metric temporal logic to timed
automata \cite{alurstl,maler2006mitl} among others.
Curiously, this direction remains
unexplored for spatio-temporal logics.
In this paper, we present the first such
translation from a discrete-time variant of the logic STREL to an {\em
    alternating finite-state automaton} (AFA).

In a nondeterministic finite automaton (NFA), the notion of acceptance of a word
$w$ is that there exists a path from the initial state that reaches a final
state.
Consider a state $q$ in the automaton that transitions to one of
$\{q_1,q_2,q_3\}$ upon reading the symbol $a$.
Then, starting from $q$ we can
say that a word $a.w$ would be accepted if it accepts $w$ from either $q_1$ or
$q_2$ or $q_3$.
Briefly, we can express that as: $\mathrm{acc}(q,aw) =
  \mathrm{acc}(q_1,w) \vee\mathrm{acc}(q_2,w) \vee \mathrm{acc}(q_3,w)$.
In an AFA, this acceptance condition can be expressed as an {\em arbitrary}
combination of conjunctions or disjunctions.
For example $\mathrm{acc}(q,aw) =
  (\mathrm{acc}(q_1,w)\wedge\mathrm{acc}(q_2,w))\vee\mathrm{acc}(q_3,w)$ means
that the automaton accepts from $q$ if it accepts $w$ from {\em both} $q_1$ and
$q_2$ or it accepts from $q_3$.
AFA are a well-studied artifact, see
\cite{vardi1995alternating,vardi2005alternating} for a detailed treatment.

In this paper, we show that discrete-time STREL formulas can be converted into
AFA.
This translation is done in two steps: (1) We first convert a discrete-time
STREL formula into an equivalent formula that does not use temporal operators
with intervals.
As we assume full knowledge of the discrete time-stamps, for
example, a formula $\mathbf{F}_{[a,b]} \varphi$ can be converted into a
disjunction over $\mathbf{X}$ (next time) formulas.
(2) We then convert the
resulting formula into an alternating finite automaton.
We prove that the
complexity of the resulting AFA is linear in the size of the intermediate
formula (with $\mathbf{X}$ operators).
% \textcolor{red}{add the other bit if needed}.

A recent development in automata-based monitoring and quantitative satisfaction
semantics for temporal logics is the use of a general algebraic framework to
define semantics of the automaton/logic.
The observation is that the usual
qualitative (Boolean) satisfaction semantics for most temporal logics utilize
the Boolean semiring $(\vee,\wedge, 0, 1)$.
However, by allowing more general
semirings, a logical formula can be re-interpreted to produce some value in the
chosen semiring \cite{jakvsic2018algebraic,nenzi2022logic}.
For example, the
semiring $(\max,\min,-\infty,+\infty)$ recovers the quantitative semantics of
Metric Temporal Logic as defined in \cite{fainekos2009robustness}.
An attractive
aspect of STREL is that it also provides quantitative semantics based on
semiring semantics.
In this paper, we show how the semiring semantics of STREL
can be directly incorporated into the AFA: instead of a standard AFA, we instead
get a {\em weighted} AFA.
The weighted AFA is especially useful for performing
quantitative monitoring in both online and offline settings.

% \todotext{Add bit about other spatial logics.}
% 
% This logic allows for specifications that can capture complex fractal spatial patterns. However, manually finding the correct formulation can be highly challenging.
% 
% Signal Spatio-Temporal Logic (SSTL)~\cite{Nenzi18_SSTL} extends STL with the somewhere and the surround operators and considers fixed spatial configurations.
% 
% Spatial Aggregation Signal Temporal Logic (SaSTL)~\cite{Ma21} enriches STL for
% expressing spatial aggregation and spatial counting characteristics. Similarly as SSTL, it considers only static graphs.
% 
% In this paper, we consider Spatio-Temporal Reach and Escape Logic (STREL)
% \cite{nenzi2022logic,bartocci2017monitoring}, a logic for monitoring distributed
% systems where various agents in the system are modeled as spatially dynamic
% graphs.

% \todotext{Why STREL?}
% In particular, STREL considers two new spatial operators, the reach and the escape, that allow us to specify properties over spatial paths. Furthermore, STREL can operate over a dynamic topological space.
% Another distinguishing feature of STREL is its ability to define semantics through the use of constraint semirings in algebraic structures.

% \todotext{Add a bit about AFA.}

To summarize, our main contributions in this paper are:
\begin{itemize}
  \item We present a linear-time construction for STREL specifications
        to alternating finite automaton (AFA).
  \item We additionally show how to adapt the quantitative semantics for STREL
        presented in \cite{nenzi2022logic} into \emph{weighted} AFA.
  \item We present how to use this construction for offline and online
        monitoring.
  \item We present results from applying the automaton-based monitoring approach for STREL on case studies in the SwarmLab environment.
        In these case studies, we monitor drone trajectories across various map configurations to evaluate their compliance with reach-avoid specifications, ensuring that drones maintain contact with the flock and can reconnect promptly when temporarily disconnected.
\end{itemize}

% !TEX root = ../main.tex

\section{Preliminaries}

\begin{definition}[Semiring~\cite{kuich1986semirings,golan1999semirings}]
  A tuple, \(\Ke = \Tuple{K, \oplus, \otimes, \bot, \top}\) is
  a \emph{semiring} with the underlying set \(K\) if \(\left(K, \oplus,
  \bot\right)\) is a commutative monoid with identity \(\bot\);
  \(\left( K, \otimes, \top \right)\) is a monoid with identity element
  \(\top\); \(\otimes\) distributes over \(\oplus\); and \(\bot\) is
  an annihilator for \(\otimes\) (for all \(k \in K, k \otimes \bot =
  \bot \otimes k = \bot\)).
\end{definition}

A semiring \(K\) is called a \emph{commutative} if multiplication \(\otimes\)
is also commutative and \emph{simple} if \(k
\oplus \top = \top\), for all \(k \in K\).
A semiring \(K\) is additively (or multiplicatively) \emph{idempotent} if, for
all \(k \in K\), \(k \oplus k = k\) (or \(k \otimes k = k\)).
A semiring that is commutative, additively idempotent, and simple is known as a
\emph{constraint semiring} or a \emph{c-semiring}.
Additionally, if a c-semiring is also multiplicatively idempotent, the semiring
creates a \emph{bounded distributive lattice}, where \(\top\) is the supremum
element and \(\bot\) is the infimum element.

\begin{definition}[De Morgan Algebras~\cite{cignoli1983dualities}]
  A De Morgan algebra is a structure \((K, \oplus, \otimes, \ominus, \bot, \top)\) such that:
  \begin{itemize}
    \item \((K, \oplus, \otimes, \bot, \top)\) is a \emph{bounded
            distributive lattice}; and
    \item \(\ominus\) is a negation function on \(K\) such that, for \(a,b \in K\)
          \begin{itemize}
            \item \(\ominus (a \oplus b) = \ominus a \otimes \ominus b\),
            \item \(\ominus (a \otimes b) = \ominus a \oplus \ominus b\),
            \item \(\ominus \ominus a = a\),
          \end{itemize}
          i.e., \(\ominus\) is an \emph{involution} on \(K\) that follows De
          Morgan's laws.
  \end{itemize}
\end{definition}

Examples of De Morgan algebras include
\begin{itemize}
  \item The Boolean algebra \((\Be, \lor, \land, \neg, \bot, \top)\) with
        \(\land, \lor,\) and \(\neg\) referring to the usual logical
        conjunction, disjunction, and negation respectively.
  \item The min-max algebra over the extended reals \((\Re^\infty = \Re \cup
        \{\infty, -\infty\}, \max{}, \min{}, -, \bot, \top)\).
\end{itemize}

\begin{definition}[Distance Domain]
  \((D, \bot_D, \top_D, +_D, \leq_D)\) is a \emph{distance domain} such that
  \(\leq_D\) is a total ordering over \(D\) with \(\bot_D\) and \(\top_D\) as
  infimum and supremum respectively, and \((B, \bot_B, +_B)\) is a monoid.
\end{definition}
The following are examples of distance domains:
\begin{itemize}
  \item Counting domain \((\Ne^\infty, 0, \infty, +, \leq)\), where \(\Ne^\infty
        = \Ne \cup \left\{ \infty \right\}\) is the extended natural numbers, \(+\)
        is the usual addition with identity \(0\) and bounded at \(\infty\) (\(i
        + \infty = \infty\)), and \(\leq\) is the usual definition of total
        ordering.
  \item Tropical domain \((\Re^+, 0, \infty, +, \leq)\) where \(\Re^+
        = \Set{d \in \Re \given d \geq 0 } \cup \infty\) is the extended positive
        reals, with \(+\) and \(\leq\) defined as usual.
\end{itemize}

\subsection{Spatial Models}

\begin{definition}[Spatial Model \cite{nenzi2022logic}]
  Given some set \(W\), a \(W\)-spatial model \(\Sc\) is a graph \((L, E)\)
  where \(L\) is a finite set of \emph{locations}, also called a \emph{space
    universe}, and \(E \subseteq L \times W \times L\) is a finite set of labeled
  edges between the locations.
\end{definition}
We will use \(\Se_W\) to describe the set of all \(W\)-spatial models, while
\(\Se_W^L\) indicates the set of \(D\)-spatial models having \(L\) as a set of
locations.
For a given \(\Sc \in \Se_W^L\), we abuse notation and let \(\Sc(l_1, l_2) = w\)
refer to the edge \((l_1, w, l_2)\) in \(\Sc\).

\begin{definition}[Paths in \(\Sc\)]
  Let \(\Sc= (L,E) \in \Se_W^L\) be a \(W\)-spatial model.
  Then, a \emph{path} \(\tau\) is a sequence of locations \(l_0 l_1 \cdots\)
  such that \(l_i \in L\) and \((l_i, w, l_{i+1}) \in E\) for some \(w \in W\)
  and all \(i \in \Ne\), and additionally, no edge \((l_i, w, l_{i+1})\) occurs
  twice.
  A \emph{simple path} is a path where no location repeats.
\end{definition}

Note that since an edge cannot occur more than once in a path, the number of
paths in a spatial model is finite, and the length of a path \(\tau\) (denoted
\(\abs{\tau}\)) is also finite.
We let \(Routes(\Sc)\) denote the set of routes in \(\Sc\) and \(Routes(\Sc,
l)\) denote the set of routes starting at location \(l\).

For a path \(\tau = l_0 l_1 \cdots\), \(i \in \Ne\), and \(l \in L\), we use
\begin{itemize}
  \item \(\tau[i]\) to denote the \(i\)th node \(l_i\) in \(\tau\)

  \item \(\tau[..i]\) to indicate the prefix \(l_0 l_1 \cdots l_i\), and
        \(\tau[i..]\) to indicate the suffix \(l_i l_{i+1} \cdots\), both of
        which are also paths in \(Routes(\Sc)\)

  \item \(l \in \tau\) when there exists an index \(i\) such that \(\tau[i] =
        l\), and \(l \not\in \tau\) otherwise.
\end{itemize}

\begin{definition}[Distance functions and Distance over paths]
  Let \(\Sc = (L, E)\) be a \(W\)-spatial model and \((D, \bot_D, \top_D, +_D,
  \leq_D)\) be a distance domain.
  A \emph{distance function} \(f: W \to D\) is a mapping from an edge
  weight in \(\Sc\) to the distance domain, and the distance over a path
  \(d^f: Routes(\Sc) \to D\) is defined as:

  \begin{displaymath}
    d^f(\tau) =
    \begin{cases}
      \sum_{i=1}^{\abs{\tau}} f(\Sc(\tau[i-1], \tau[i])) & ~\text{if}~ \abs{\tau} \geq 2 \\
      \bot_B                                             & ~\text{otherwise},
    \end{cases}
  \end{displaymath}
  where the summation is defined over the monoid \(+_D\).
\end{definition}

We use the notation \(d^f_\tau[i]\) to refer to \(d^f(\tau[..i])\), i.e., the
path distance of the prefix of a path \(\tau\) up to the \(i\)th location on
the path.
And we use the following notation to describe the distance shortest path
between two locations \(l_1, l_2 \in L\) in \(\Sc\):
\begin{displaymath}
  d^f_{\Sc}[l_1,l_2] = \min\Set{d^f_\tau[i] \given \tau \in Routes(\Sc,l_1),
    \tau[i] = l_2}.
\end{displaymath}

\subsection{Spatio-temporal Reach and Escape Logic (STREL)}

The syntax for STREL~\cite{bartocci2017monitoring,nenzi2022logic} for
discrete-time traces is given by the recursive definition
\begin{equation}
  \begin{split}
    \varphi := &
    \mu \mid \neg \varphi \mid \varphi \land \varphi \mid \varphi \lor \varphi \\
               &
    \mid \TNext \varphi \mid \varphi \Until_{[t_1, t_2]}
    \varphi \mid \varphi \Reach_{[d_1, d_2]}^{f} \varphi
    \mid \Escape_{[d_1, d_2]}^{f} \varphi.
  \end{split}
\end{equation}
Here, \(\mu\) is an atomic predicate, negation \(\neg\), conjunction \(\land\),
and disjunction \(\lor\) are the standard Boolean connectives, \(\TNext\) is the
temporal \emph{Next} operator, and \(\Until_{[t_1, t_2]}\) is the \emph{Until}
temporal operator with a positive real time interval \([t_1, t_2]\).
These are the same temporal operators found in Linear Temporal Logic and Metric
Temporal Logic.
\emph{Reach} \(\Reach\) and \emph{Escape} \(\Escape\) are the \emph{spatial}
operators defined in STREL, where \(d_1, d_2 \in D\) are \emph{distances} in
a domain \(D\) and \(f\) is a distance function mapping to \(D\).
% \myhighlight{%
% In this paper, we restrict ourselves to spatial operators with the
% distance intervals starting at \(0\), i.e., \(d_1 = 0\) in the syntax above.
% Thus, we will use the syntax \(\phi_1 \Reach^f_{\leq d} \phi_2\) to simplify
% the notation.
% %
% }

Informally, the \emph{Reach} operator \(\phi_1 \Reach_{[d_1,d_2]}^f \phi_2\)
describes the behavior of reaching a location satisfying \(\phi_2\) through a
path \(\tau\) such that \(d_1 \leq d^f(\tau) \leq d_2\) and \(\phi_1\) is
satisfied at all locations in the path.
The \emph{Escape} operator \(\Escape_{[d_1, d_2]}^f \phi_1\) instead describes
the possibility of escaping from a certain region via a path that passes only
through locations that satisfy \(\phi_1\) such that the distance between the
starting location and the last location belongs in the interval \([d_1, d_2]\).
Note that for \emph{Escape}, the distance constraint is not necessarily over the
satisfying path, but the shortest distance path between the start and end
location (but they may be the same).

From the above syntax we can derive the standard temporal
operators \emph{Eventually} \(\Ev_{[t_1,t_2]}\) and \emph{Globally}
\(\Alw_{[t_1, t_2]}\).
Moreover, we can derive from them additional \emph{spatial operators}:
\begin{itemize}
  \item \(\diamonddiamond_{\leq d}^f \varphi = \top \Reach_{\leq d}^f \varphi\)
        denotes the \emph{Somewhere} operator
  \item \(\boxbox_{\leq d}^f \varphi = \neg \diamonddiamond_{\leq d}^f \neg
        \varphi\) denotes the \emph{Everywhere} operator
  \item \(\varphi_1 \varocircle_{\leq d}^f \varphi_2 \) denotes the \emph{Surround}
        operator.
\end{itemize}
The \emph{Somewhere} and \emph{Everywhere} operations describe behavior for some
or all locations within reach of a specific location, and the \emph{Surround}
expresses the notion of being surrounded by a region that satisfies \(\phi_2\),
while being in a \(\phi_1\) satisfying region.

To formally define the semantics of a STREL specification, we first define
a \emph{trace} over \(W\)-spatial models.

\begin{definition}[\(K\)-Labeled Trace]
  Let \((K, \oplus, \otimes, \ominus, \bot, \top)\)  be a De Morgan algebra and
  \(L\) a space universe.
  Then, a \emph{\(K\)-labeled trace} is a pair
  \((\sigma, \nu)\) such that
  \begin{itemize}
    \item \(\sigma = \Sc_0 \Sc_1 \cdots\) is a finite
          sequence of spatial models

    \item \(\nu(\Sc_i, l, \mu) \in K\) is a labeling
          function that maps the location \(l \in L\) in \(\Sc_i\) to a value in
          \(K\) for an atomic predicate \(\mu\).
  \end{itemize}
\end{definition}
We will simply refer to a trace as \(\sigma\) when the labeling function \(\nu\)
is obvious.

Thus, the algebraic semantics \cite{nenzi2022logic} \(\rho_l(\sigma, \phi, t)\)
of a discrete-time STREL specification \(\phi\) at an \emph{ego} location \(l
\in L\) for a \(K\)-labeled trace \(\sigma = \Sc_0 \Sc_1 \cdots\) at time \(t
\in \Ne\) can be defined as:

\begin{equation*}
  \begin{array}{lcl}
    \rho_l(\sigma, \mu, t)                               & = & \nu(\Sc_t, l, \mu)                                          \\
    \rho_l(\sigma, \neg \phi, t)                         & = & \ominus \rho_l(\sigma, \phi, t)                             \\
    \rho_l(\sigma, \phi_1 \land \phi_2, t)               & = & \rho_l(\sigma, \phi_1, t) \otimes \rho_l(\sigma, \phi_2, t) \\
    \rho_l(\sigma, \phi_1 \lor \phi_2, t)                & = & \rho_l(\sigma, \phi_1, t) \oplus \rho_l(\sigma, \phi_2, t)  \\
    \rho_l(\sigma, \phi_1 \Until_{[t_1,t_2]} \phi_2, t)  & = &
    \bigoplus\limits_{\mathclap{t' \in t + [t_1, t_2]}}
    \rho_l(\sigma, \phi_2, t') \otimes
    \bigotimes\limits_{\mathclap{t''\in [t',t]}} \rho_l(\sigma, \phi_1, t'')                                               \\
    \rho_l(\sigma, \phi_1\Reach_{[d_1,d_2]}^f \phi_2, t) & = &                                                             \\
    \multicolumn{3}{r}{%
      \bigoplus\limits_{{\tau \in Routes(\Sc_t, l)}}
      \bigoplus\limits_{{i: d^f_\tau[i] \in [d_1,d_2]}}
      \left(
      \rho_{\tau[i]}(\sigma, \phi_2, t)
      \otimes \bigotimes\limits_{j<i} \rho_{\tau[j]}(\sigma, \phi_1, t)
    \right) }                                                                                                              \\[2em]
    \rho_l(\sigma, \Escape_{[d_1,d_2]}^f \phi, t)        & = &                                                             \\
    \multicolumn{3}{r}{%
    \bigoplus\limits_{\tau \in Routes(\Sc_t,l)}
    \bigoplus\limits_{\substack{l' \in \tau                                                                                \\
    d^f_{\Sc_t}[l,l'] \in [d_1, d_2]}}
    \bigotimes\limits_{\substack{i \leq j                                                                                  \\
    \tau[j] = l'}}
    \rho_{\tau[i]}(\sigma, \phi, t)
    }
  \end{array}
\end{equation*}

See that for the above algebraic semantics, using the Boolean algebra \((\Be,
\lor, \land, \neg, \bot, \top)\) gets us the Boolean/quantitative semantics of
STREL over finite traces.
For the Boolean algebra, we say that a trace \(\sigma\) satisfies a
specification \(\phi\) at the ego location \(l\) (denoted \(\sigma \models
(\phi, l)\)) if the output of \(\rho_l(\sigma, \phi, 0)\) is \(\top\).
We also use the same notation, \(\Sc \models (\phi, l)\), to denote if the
specification \(\phi\) is satisfied for a spatial model \(\Sc\) at location
\(l\).

\subsection{Alternating Automata}

For a given set \(X\), let \(\PBool[X]\) be the set of positive Boolean formulas
over \(X \cup \{\top, \bot\}\), i.e., Boolean formulas built over \(X\) using
only \(\land\) and \(\lor\) logical connectives.
We let \(X^*\) denote the set of all finite-length sequences \(x_0,x_1,\ldots\) such that \(x_i \in X\) for all \(i \in \Ne\).

In a non-deterministic automaton \(\Ac = \left( \Sigma, Q, Q_0, \Delta, F
\right)\), with input alphabet \(\Sigma\), states \(Q\), initial states
\(Q_0 \subseteq Q\), and final states \(F \subseteq Q\), we usually model the
transition relation \(\Delta\) as a partial function \(\Delta: Q
\times \Sigma \to 2^Q\).
For example, for a state \(q \in Q\) and an input symbol \(a \in \Sigma\), the
transition relation may output the set of successor states as \(\Delta(q,
a) = \left\{ q_1, q_2, q_3 \right\}\).
The above can also be represented as a function \(\Delta: Q\times\Sigma \to
\PBool[Q]\) such that the set \(\left\{ q_1,q_2,q_3 \right\}\) is equivalent to the
formula \(q_1 \lor q_2 \lor q_3\).

While non-deterministic automata restrict such formulas to disjunctions only, in
alternating automata, we allow the use of arbitrary formula from \(\PBool[Q]\).
Thus, we can have an \emph{alternating} transition \[\Delta(q,a) = (q_1 \land
  q_2) \lor (q_3 \land q_4),\] which means that the automaton accepts a word
\(aw\), where \(a \in \Sigma, w \in \Sigma^*\) from state \(q\) if it accepts
the word \(w\) from both \(q_1\) and \(q_2\) or from both \(q_3\) and \(q_4\).
Such a transition combines \emph{existential} transitions (over the
disjunctions) and \emph{universal} transitions (over the conjunctions).

\begin{definition}[Alternating Automata~\cite{vardi1996automatatheoretic}]%
  \label{def:afa}
  An \emph{alternating automaton} is a tuple \(\Ac = \left( \Sigma, Q, q_0,
  \Delta, F \right)\), where \(\Sigma\) is a finite nonempty alphabet, \(Q\) is
  a finite set of states with initial state \(q_0 \in Q\), \(F \subseteq Q\)
  is a set of accepting states, and \(\Delta: Q \times \Sigma \to \PBool[Q]\) is
  a transition relation function.
\end{definition}

Due to the nature of the universal transitions, for an input \emph{word} \(\sigma \in \Sigma^*\), where \(\sigma = \sigma_0 \sigma_1 \ldots\), the \emph{run} \(\Run_\Ac(\sigma)\) of an alternating automaton \(\Ac\) induced by \(\sigma\) is a \(Q\) labeled rooted tree where
\begin{itemize}
  \item The root of the tree is labeled by \(q_0\)
  \item If an interior node, at some level \(i\), has only disjunctions in \(\theta' = \Delta(q_i, \sigma_i)\), the node has a single child labeled by a \(q\) in \(\theta\).
  \item If an interior node, at some level \(i\), has a conjunction over \(k\) disjunction clauses, then it will have \(k\) children, each labeled by one state in the \(k\) clauses.
\end{itemize}
A run is \emph{accepting} if all the leaf nodes are labeled by states in \(F\).
Intuitively, the automaton monitor clones itself and transitions to all the states in a universal transition, and the runs of each cloned automata must all be accepting for the word to be accepted.

% !TEX root = ../main.tex

\section{Conversion to Alternating Automata}

% \begin{remark}
%   The following construction holds specifically for discrete-time semantics.
%   Thus, the \emph{next} operator \(\TNext \phi\) is well-defined.
% \end{remark}

In this section, we will derive a framework for translating STREL specifications to alternating word automata.
We will first look at translating STREL to an intermediate logic, \emph{Spatial Linear Temporal Logic} (SpLTL), from which we will derive the direct translation into an alternating word automaton.

\subsection{Spatial Linear Temporal Logic}

Similar to STREL, we define the syntax of a Spatial Linear Temporal Logic (SpLTL) recursively as:
\begin{equation}
  \begin{split}
    \varphi := &
    \mu \mid \neg \varphi \mid \varphi \land \varphi \mid \varphi \lor \varphi \\
               &
    \mid \TNext \varphi \mid \varphi \Until
    \varphi \mid \varphi \Reach_{[d_1, d_2]}^{f} \varphi
    \mid \Escape_{[d_1, d_2]}^{f} \varphi.
  \end{split}
\end{equation}

Here, the various operators maintain the same semantics as in discrete-time
STREL, except that we do not permit intervals on \emph{Until} (and by extension
on the derived \emph{Eventually} and \emph{Globally} operators).
For the SpLTL operators, we retain the same notation for the semantics \(\rho\)
and for denoting accepting traces and models.

\begin{lemma}
  \label{lemma:strel-spltl}
  For any discrete-time STREL specification, \(\varphi\), there exists a corresponding SpLTL specification \(\varphi'\) such that
  \begin{enumerate}
    \item For an input trace \(\sigma\) and ego location \(l\), \(\sigma \models (\varphi', l)\) \emph{if and only if} \(\sigma \models (\varphi, l)\).
    \item \(\abs{\varphi'}\) is in \(O(T)\), where \(\abs{\cdot}\) counts the number of syntactic elements and terms in the formula and \(T\) is the sum of all the intervals in \(\varphi\).
  \end{enumerate}
\end{lemma}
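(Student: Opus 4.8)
The plan is to exhibit an explicit syntax-directed translation $\varphi \mapsto \varphi'$ and then establish correctness (part 1) by structural induction and the size bound (part 2) by an accounting argument. Since STREL and SpLTL differ only in that SpLTL forbids intervals on the temporal operators---the spatial operators $\Reach_{[d_1,d_2]}^f$ and $\Escape_{[d_1,d_2]}^f$, whose intervals range over the distance domain $D$ rather than over time, are syntactically identical in both logics---the translation acts as the identity on atomic predicates, on the Boolean connectives, on $\TNext$, and on the two spatial operators, recursing into their immediate subformulas. All the real work is therefore concentrated in the bounded-until case.

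For that case I would exploit the fact that we work in discrete time, so a bounded until can be unrolled step by step. First I would reduce a general interval to the case $t_1 = 0$ via the shift identity
\[
  \varphi_1 \Until_{[t_1,t_2]} \varphi_2 \;\equiv\; \Bigl(\bigwedge_{j=0}^{t_1-1}\TNext^j \varphi_1\Bigr) \;\land\; \TNext^{t_1}\bigl(\varphi_1 \Until_{[0,\,t_2-t_1]} \varphi_2\bigr),
\]
and then unroll the remaining $[0,n]$ until with the recurrence
\[
  \varphi_1 \Until_{[0,n]} \varphi_2 \;\equiv\; \varphi_1 \land \bigl(\varphi_2 \lor \TNext(\varphi_1 \Until_{[0,n-1]} \varphi_2)\bigr), \qquad \varphi_1 \Until_{[0,0]} \varphi_2 \equiv \varphi_1 \land \varphi_2.
\]
Both identities are to be verified not merely for the Boolean algebra but directly on the algebraic semantics $\rho_l$ over any De Morgan algebra, by splitting the outer $\bigoplus$ in the definition of $\rho_l(\sigma,\varphi_1\Until_{[t_1,t_2]}\varphi_2,t)$ at its first index and factoring out the common $\bigotimes$ term, which is justified by distributivity of $\otimes$ over $\oplus$ together with the commutativity and associativity of the meet. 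The derived operators $\Ev$ and $\Alw$ are handled by first rewriting them through $\Until$. I would then establish the stronger equality $\rho_l(\sigma,\varphi',t)=\rho_l(\sigma,\varphi,t)$ for all locations $l$ and all $t$ by a routine structural induction (the spatial cases follow because $\rho_l$ for $\Reach$/$\Escape$ at time $t$ depends only on the subformula values, which agree by the induction hypothesis), and part (1) then follows by specializing $K$ to the Boolean algebra.

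I expect the main obstacle to be part (2), since a naive flat expansion of the identities above is quadratic: writing $\TNext^j\varphi_1$ out uses $j$ nested $\TNext$'s, so the conjunction $\bigwedge_{j=0}^{t_1-1}$ would cost $0+1+\cdots+(t_1-1)=\Theta(t_1^2)$ operators, and each level of the $[0,n]$ recurrence would re-copy $\varphi_1$ and $\varphi_2$. The fix, which I would make explicit, is to build $\varphi'$ as a shared syntax DAG: the chain $\varphi_1,\TNext\varphi_1,\TNext^2\varphi_1,\dots$ is constructed incrementally so that each additional $\TNext$ is a single new node reused by every conjunct, and each instance $\varphi_1\Until_{[0,k]}\varphi_2$ in the recurrence is one new node referencing the shared $\varphi_1$, $\varphi_2$, and the node for $\varphi_1\Until_{[0,k-1]}\varphi_2$. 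With this representation every temporal operator with interval $[t_1,t_2]$ contributes only $O(t_2)$ fresh nodes on top of its already-counted subformulas; summing over all temporal operators bounds the expansion overhead by $O(T)$, where $T$ is the total interval magnitude, giving $|\varphi'| \in O(T)$ (equivalently $O(\abs{\varphi}+T)$, with the $O(T)$ term being the expansion overhead).
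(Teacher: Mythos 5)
Your proof is correct and reaches the same destination as the paper's, but via a different decomposition of the bounded until. The paper rewrites \(\phi \Until_{[t_1,t_2]} \psi\) as \(\Ev_{[t_1,t_2]}\psi \land (\phi \Until_{[t_1,\infty)} \psi)\), absorbs the lower bound into \(\Alw_{[0,t_1]}(\phi \Until \psi)\), and then expands only the bounded \(\Ev\)/\(\Alw\) into nested \(\TNext\)-disjunctions/conjunctions, so an unbounded \(\Until\) survives in \(\varphi'\); you instead shift by \(\TNext^{t_1}\) and fully unroll \(\Until_{[0,n]}\) with the one-step recurrence, never routing bounded operators through an unbounded until. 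Both sets of identities are standard, both need the same distributivity argument to hold over a general De Morgan algebra, and both give the same asymptotics. Where your write-up genuinely improves on the paper is the size accounting: the paper only observes that each interval ``increases the formula by a factor of \(t_2 - t_1\),'' which, read literally as a tree expansion, multiplies across nested intervals (e.g.\ \(k\) nested \(\Ev_{[0,n]}\) yields a tree of size \(\Theta(n^k)\) while \(T = kn\)); your explicit insistence on a shared syntax DAG --- equivalently, counting distinct subformulas, which is what the automaton construction actually consumes as \(\Phi_\phi\) --- is exactly what is needed for the \(O(T)\) (really \(O(\abs{\varphi} + T)\)) claim to be true as stated. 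One small mismatch present in both arguments: the lower endpoint also contributes \(\Theta(t_1)\) nodes (your \(\TNext^j\) chain, the paper's \(\Alw_{[0,t_1]}\)), so the per-operator overhead is \(O(t_2)\) rather than \(O(t_2 - t_1)\); this is harmless provided \(T\) is read as the sum of interval endpoints rather than of interval widths.
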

\begin{proof}
  We can prove the above by showing the expansion formulas for every temporal operator with an interval to an untimed one~\cite{donze2013efficient,esparza2016ltl,vardi1996automatatheoretic}:
  \begin{itemize}
    \item \(\phi \Until_{[t_1, t_2]} \psi \equiv \Ev_{[t_1, t_2]} \psi \land
          \left(\phi \Until_{[t_1, \infty)} \psi \right)\), for \(t_1, t_2 \in
          \Ne, t_1 < t_2\)
    \item \(\phi \Until_{[t_1, \infty)} \psi \equiv \Alw_{[0,t_1]} \left( \phi
          \Until \psi\right)\)
          % \item \(\phi \Until \psi \equiv \psi \lor \left(\phi \land \TNext \left( \phi
          %       \Until \psi\right)\right)\)
    \item \(\Ev \phi \equiv \top \Until \phi\) and \(\Alw \phi \equiv \neg \Ev
          \neg \phi \equiv \neg \left( \top \Until \neg\phi \right)\)
    \item Bounded formulas \(\Ev_{[t_1, t_2]} \phi \) and \(\Alw_{[t_1,t_2]}
          \phi\), where \(t_1,t_2 \in \Ne, t_1<t_2\), are equivalent to nested
          disjunctions or conjunctions of the \emph{next} operator.
          For example,
          \[F_{[0,3]} a \equiv a \lor \TNext (a \lor \TNext (a \lor \TNext a)).
          \]
  \end{itemize}

  From the above, we can see that for each temporal expression, there is a linear increase in the number of syntactic elements.
  Specifically, if an interval for a formula is \([t_1, t_2]\), we see that the formula increases by a factor of \(t_2 - t_1\) (with some constant overhead).
  Thus, if \(T\) is the sum of all the interval sizes in \(\varphi\), the size of \(\varphi'\) is in \(O(T)\).
\end{proof}

% Let us first establish some expansion formulas for the temporal
% operators~\cite{donze2013efficient,esparza2016ltl,vardi1996automatatheoretic}:
% \begin{itemize}
%   \item \(\phi \Until_{[t_1, t_2]} \psi \equiv \Ev_{[t_1, t_2]} \psi \land
%         \left(\phi \Until_{[t_1, \infty)} \psi \right)\), for \(t_1, t_2 \in
%         \Ne, t_1 < t_2\)
%   \item \(\phi \Until_{[t_1, \infty)} \psi \equiv \Alw_{[0,t_1]} \left( \phi
%         \Until \psi\right)\)
%   \item \(\phi \Until \psi \equiv \psi \lor \left(\phi \land \TNext \left( \phi
%         \Until \psi\right)\right)\)
%   \item \(\Ev \phi \equiv \top \Until \phi\) and \(\Alw \phi \equiv \neg \Ev
%         \neg \phi \equiv \neg \left( \top \Until \neg\phi \right)\)
%   \item Bounded formulas \(\Ev_{[t_1, t_2]} \phi \) and \(\Alw_{[t_1,t_2]}
%         \phi\), where \(t_1,t_2 \in \Ne, t_1<t_2\), are equivalent to nested
%         disjunctions or conjunctions of the \emph{next} operator.
% \end{itemize}

% Given a STREL specification \(\phi\), let \(\abs{\phi}\) be the size of the
% formula after performing the above expansions such that the only operators in
% the expanded formula are the propositional operations \(\neg\), \(\land\), and
% \(\lor\), and the temporal operators \(\TNext\) and \(\Until\).

\subsection{SpLTL to AFA}

We use \(\Phi_\phi\) to denote the set of all subformulas in an SpLTL specification \(\phi\) and their negations (including \(\phi\) and \(\neg\phi\)).

Let \(A\) be the finite set of atomic predicates in the system and \(L\) be the
space universe.
A labeling function \(\nu: \Se_W^L \times L \times A \to \Be\) is a mapping such
that \(\nu(\Sc, l, \mu)\) determines if the predicate \(\mu\) is satisfied at
location \(l\), i.e., if \(\Sc \models (\mu, l)\).

From these, we construct an automaton \(\Ac = (\Sigma, Q, q_0, \Delta, F)\)
such that \(\Sigma \subseteq \Se_W^L\) is the input alphabet over dynamic
spatial models with universe \(L\):

\paragraph*{Automaton states \(Q\)}
We define the set of states \(Q\) in the automaton to represent each of the
formulas in \(\Phi_\phi\) evaluated at each location in \(L\) in the system
\(Q = \Set{\forall \psi \in \Phi_\phi, \forall l \in L, q_\psi^l }.
\)
The initial state corresponding to an ego location \(l_0\) is the state
corresponding to the formula we are monitoring \(\phi\), i.e., \(q_\phi^{l_0}\).
The accepting states \(F\) consist any
state representing a formula of the form \(\neg(\phi \Until \psi)\) at any
location (by extension, this would include states corresponding to unbounded
\emph{Globally} expressions \(\Alw \varphi\)).
Moreover, the Boolean formula \(\top \in \PBool[Q]\) is an accepting sink state.

\begin{note}
  We will eschew the superscript location over a state \(q\) when the context is
  clear that the superscript does not change.
  Moreover, the initial state of the automaton can be picked depending on the
  ego location such that the initial state is always \(q_\phi^{l_0}\), where
  \(\phi\) is the specification being monitored and \(l_0\) is the choice of ego
  location.
\end{note}

For any formula \(\theta \in \PBool[Q]\), the \emph{dual} \(\overline{\theta}\) of the
formula can be obtained as follows:
% LTeX: enabled=false
\begin{equation*}
  \begin{array}{c@{~=~}l}
    \overline{q}_\top^l      & \bot                                                     \\
    \overline{q}_\bot^l      & \top                                                     \\
    \overline{q}_{\varphi}^l & q_{\neg\varphi}^l \text{~for~} q_\varphi^l \in Q         \\
    \overline{\left(q_\varphi^{l_1} \land q_\psi^{l_2}\right)}
                             & \overline{q}_\varphi^{l_1} \lor \overline{q}_\psi^{l_2}
    \\
    \overline{\left(q_\varphi^{l_1} \lor q_\psi^{l_2}\right)}
                             & \overline{q}_\varphi^{l_1} \land \overline{q}_\psi^{l_2}
  \end{array}
\end{equation*}
% LTeX: enabled=true

\paragraph*{Transition relation \(\Delta\)}

For some \(\Sc \in \Sigma\), we define the transition relation \(\Delta: Q \times
\Sigma \to \PBool[Q]\) recursively on states corresponding to the subformulas in
the specification \(\phi\) as follows:
\begin{equation*}
  \renewcommand{\arraystretch}{2.2}
  \begin{array}{l@{\ =\ \ }l}
    \Delta(q_\top^l, \cdot)             & \top                                                \\
    \Delta(q_\bot^l, \cdot)             & \bot                                                \\
    \Delta(q_\mu^l, \Sc)                & \nu(\Sc, l, \mu)                                    \\
    \Delta(q_{\neg \phi}^l, \Sc)        & \overline{\Delta(q_\phi^l, \Sc)}                    \\
    \Delta(q_{\phi \land \psi}^l, \Sc)  & \Delta(q_\phi^{l}, \Sc) \land \Delta(q_\psi^l, \Sc) \\
    \Delta(q_{\TNext\phi}^l, \Sc)       & q_\phi^l                                            \\
    \Delta(q_{\phi \Until \psi}^l, \Sc) & \Delta(q_\psi^l,\Sc) \lor \left(
    \Delta(q_\phi^l, \Sc) \land q_{\phi \Until \psi}^l\right)                                 \\
    % \Delta(q_{\Ev \phi}^l, \Sc)             & \Delta(q_\phi^l) \lor q_{\Ev\phi}^l             \\
  \end{array}
\end{equation*}
\begin{equation*}
  \renewcommand{\arraystretch}{2.2}
  \begin{array}{c@{\ =\ }l}
    \Delta(q_{\phi \Reach^f_{[d1,d2]} \psi}^l, \Sc) & \\
    \multicolumn{2}{c}{%
      \bigvee\limits_{\tau \in Routes(\Sc, l)}
      \bigvee\limits_{i: \left(d_\tau^f[i]\in [d_1,d_2]\right)}
      \left( \Delta(q_\psi^{\tau[i]}, \Sc) \land \bigwedge\limits_{j < i} \Delta(q_{\phi}^{\tau[j]},\Sc) \right)
    }                                                 \\
    \Delta(q_{\Escape^f_{\leq d} \psi}^l, \Sc)      & \\
    \multicolumn{2}{c}{%
    \bigvee\limits_{\tau \in Routes(\Sc_t,l)}
    \bigvee\limits_{\substack{l' \in \tau             \\
    d^f_{\Sc_t}[l,l'] \in [d_1, d_2]}}
    \bigwedge\limits_{\substack{i \leq j              \\
    \tau[j] = l'}}
    \Delta(q_\psi^{\tau[i]}, \Sc)
    }
  \end{array}
\end{equation*}

% \begin{todocomment}
%   \begin{itemize}
% \item Reword the reach operator similar to the Until operator (essentially a recursive formula).
% \item Relax the intervals on the Reach operator to be \(\leq d\) similar to Somewhere and Everywhere.
% \end{itemize}
% \end{todocomment}

\begin{theorem}[Linear-sized SpLTL automata]
  Given a SpLTL formula \(\phi\) with atomic propositions \(A\) and locations
  \(L\), one can build an alternating finite automaton \(\Ac = \Tuple{\Sigma,
    Q, q_0, \Delta, F}\) where the language \(\Lc(\Ac, l)\) is exactly the set of
  traces that satisfy the formula \(\phi\) at \emph{ego location} \(l\) such
  that \(\abs{Q}\) is in \(O(\abs{L}\abs{\phi})\).
\end{theorem}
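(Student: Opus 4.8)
The plan is to prove the two assertions separately: the size bound by a direct count, and the language identity \(\Lc(\Ac,l) = \Set{\sigma \given \sigma \models (\phi,l)}\) by an induction that interleaves progression through the trace with structural recursion on the formula.

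For the size bound, I would note that by construction \(Q = \Set{q_\psi^l \given \psi \in \Phi_\phi,\ l \in L}\), so \(\abs{Q} = \abs{\Phi_\phi}\cdot\abs{L}\). Since \(\Phi_\phi\) contains each subformula of \(\phi\) together with its single negation, and the number of distinct subformulas is at most the number of syntactic nodes \(\abs{\phi}\), we have \(\abs{\Phi_\phi} \le 2\abs{\phi} = O(\abs{\phi})\). The essential observation is that the dual operation \(\overline{(\cdot)}\) distributes negation through a transition formula \emph{syntactically}, rewriting each leaf \(q_\xi^{l'}\) to \(q_{\neg\xi}^{l'}\) rather than creating fresh states, so negation introduces no blow-up and \(\abs{Q} = O(\abs{L}\abs{\phi})\).

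The core of the argument is the stronger invariant that for every \(\psi \in \Phi_\phi\), location \(l\), trace \(\sigma\), and time \(t\), the automaton accepts the suffix \(\sigma[t..]\) from state \(q_\psi^l\) if and only if \(\rho_l(\sigma,\psi,t) = \top\) in the Boolean algebra \((\Be, \lor, \land, \neg, \bot, \top)\); instantiating \(\psi = \phi\), \(t = 0\), and the ego location then yields the theorem. I would prove this by an outer induction on the length of the remaining suffix, and inside each step a sub-lemma proved by structural induction on \(\psi\): if every next-time leaf \(q_\xi^{l'}\) occurring in \(\Delta(q_\psi^l,\Sc_t) \in \PBool[Q]\) is interpreted as \(\rho_{l'}(\sigma,\xi,t+1)\), then the value of this positive Boolean combination is exactly \(\rho_l(\sigma,\psi,t)\). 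The atomic case holds since \(\Delta(q_\mu^l,\Sc_t) = \nu(\Sc_t,l,\mu) = \rho_l(\sigma,\mu,t)\); the \(\land\), \(\lor\), and \(\TNext\) cases are immediate; the \(\Until\) case matches the fixpoint unfolding \(\rho_l(\sigma,\phi\Until\psi,t) = \rho_l(\sigma,\psi,t) \oplus \bigl(\rho_l(\sigma,\phi,t) \otimes \rho_l(\sigma,\phi\Until\psi,t+1)\bigr)\) term for term; and the \(\Reach\) and \(\Escape\) cases follow by comparing the route-indexed \(\bigvee/\bigwedge\) of the transition against the \(\bigoplus/\bigotimes\) of the semantics, using that these spatial operators do not advance time and so resolve entirely within \(\Sc_t\).

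Two points will demand the most care. The negation case rests on \(\Delta(q_{\neg\psi}^l,\Sc) = \overline{\Delta(q_\psi^l,\Sc)}\): using the De Morgan-algebra laws on \(\Be\) — the involution \(\ominus\ominus a = a\) and the two dual distribution identities — I would argue that the interpreted value of the dualized formula is the complement of that of the original, which is precisely \(\ominus\rho_l(\sigma,\psi,t) = \rho_l(\sigma,\neg\psi,t)\). The main obstacle, however, is the finite-trace boundary, where the accepting set \(F\) must exactly reconcile the strong (finite-trace) semantics of \(\Until\) with the acceptance of empty-suffix leaves. I would verify that because \(q_{\phi\Until\psi}^l \notin F\) while \(q_{\neg(\phi\Until\psi)}^l \in F\), an \(\Until\) obligation still pending when the suffix is exhausted is rejected (matching the fact that \(\phi\Until\psi\) with no remaining witness for \(\psi\) evaluates to \(\bot\)) whereas its negation is accepted, and that this complementarity of \(F\) on exactly the temporal-residual states is what makes the dual construction for \(\neg\) correct at the boundary as well as in the interior. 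Discharging the outer induction then closes the invariant and the theorem follows.
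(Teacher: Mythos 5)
Your proposal is correct, but it does considerably more than the paper's own proof, which addresses only the state-count bound. The paper's argument is exactly your first paragraph: \(Q\) contains one state per element of \(\Phi_\phi\) per location, \(\Phi_\phi\) holds each subformula and its negation, so \(\abs{Q} \leq 2\abs{L}\abs{\phi}\); your added observation that the dual operation \(\overline{(\cdot)}\) only relabels leaves \(q_\xi^{l'} \mapsto q_{\neg\xi}^{l'}\) within \(\PBool[Q]\) rather than minting new states is the right justification for why negation causes no blow-up, and is implicit in the paper. Where you genuinely diverge is that you also prove the language claim \(\Lc(\Ac,l) = \Set{\sigma \given \sigma \models (\phi,l)}\), which the paper asserts in the theorem statement but never argues. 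Your invariant --- acceptance of the suffix \(\sigma[t..]\) from \(q_\psi^l\) iff \(\rho_l(\sigma,\psi,t) = \top\), established by an outer induction on suffix length with an inner structural induction showing that \(\Delta(q_\psi^l,\Sc_t)\), with next-time leaves read at \(t+1\), evaluates to \(\rho_l(\sigma,\psi,t)\) --- is the standard Vardi-style correctness argument, and you correctly isolate the two delicate points: that the dual construction for \(\neg\) is sound by the De Morgan laws on \(\Be\), and that the accepting set \(F\) (negated-\(\Until\) states accepting, \(\Until\) states rejecting) is what reconciles the strong finite-trace semantics at the boundary. One small point worth making explicit at that boundary: leaves other than \(\Until\)-residuals can also survive to the end of the trace (e.g.\ \(q_\phi^l\) produced by \(\Delta(q_{\TNext\phi}^l,\Sc_n)\) on the last symbol), and their exclusion from \(F\) is likewise needed to match the strong semantics of \(\TNext\) on finite traces. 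In short, your proof buys an actual correctness argument for the construction; the paper's buys only the cardinality bound.
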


\begin{proof}
  We can clearly see from the above construction that \(Q\) has states
  corresponding to each subformula in \(\phi\), their duals, and states
  corresponding to \(\top\) and \(\bot\).
  Moreover, since \(\Delta\) splits the computation across locations, we
  have, for each location \(l \in L\), a node corresponding to each subformula
  and its dual.
  Thus, \(Q\) is such that \(\abs{Q} \leq 2 \abs{L} \abs{\phi}\).
\end{proof}

\begin{remark}
  Note that while, in general, the \(Q\) can have \(2 \abs{L}
  \abs{\phi}\) states, for several practical STREL formulas, many states
  corresponding to subformulas in \(\Phi_\phi\) are not reachable from any
  initial state \(q_\phi^l\).
  These states can be determined purely through the set of reachable states
  through the temporal operators, and the non-reachable states can be pruned.
\end{remark}

% !TEX root = ../main.tex

\subsection{Alternating Weighted Automata}

In this section, we look at how we can generalize the above construction to
incorporate the quantitative semantics of STREL.

Notice that for some finite set \(X\), formulas \(\theta \in \PBool[X]\) that
are in DNF form are effectively polynomials over the Boolean semiring with
support variables \(X\), with logical \emph{and} as multiplication and logical
\emph{or} as addition.

For a commutative semiring \(K\) and a set \(X = \left\{ x_1, \ldots, x_n
\right\}\), we use \(K[X]\) to denote the set of all polynomials with
indeterminates \(x_1,\ldots,x_n\) and coefficients in \(K\).
A monomial \(m\) with indeterminates \(X = \left\{x_1, \ldots, x_n\right\}\) is
a polynomial in \(K[X]\) such that \(m = c x_1^{k_1} x_2^{k_2} \ldots
x_n^{k_n}\) for some coefficient \(c \in K\) and non-negative integers
\(k_1,\ldots,k_n\) where multiplication is under the semiring operation.
A constant \(a \in K\) is a monomial in \(K[X]\) with the power of all
indeterminates equal to 0 and coefficient \(a\).
The \emph{support} of a polynomial is the set of indeterminates (or variables)
with coefficients not equal to \(\bot \in K\).
Moreover, a polynomial \(\theta \in K[X]\) is a scalar function \(\theta:
K^{\abs{X}} \to K\).

For idempotent semirings (or, equivalently, bounded distributive lattices),
since \(x^k = x\) for all non-negative integers \(k\) such that \(k > 0\), we
can see that the \emph{degree} of the polynomials \(K[X]\) over a bounded
distributive lattice does not exceed \(1\).
This includes the Boolean semiring (equivalent to the previous section) and the
\(\min-\max\) semiring, which defines the quantitative semantics of STREL.

Let us now formally define the alternating weighted automaton, along with the
conversion of a STREL formula to such an automaton while preserving quantitative
semantics.

\begin{definition}[Alternating Weighted Automata~\cite{kostolanyi2018alternating}]
  Let \(K\) be a commutative semiring.
  Then, an \emph{alternating weighted automaton} over \(K\) is a tuple \(\Ac
  = \left( \Sigma, Q, \Delta, \alpha, \beta \right)\), where \(\Sigma\) and
  \(Q\) are the input alphabet and the non-empty finite set of states.
  Here, the transition function \(\Delta\) is a polynomial assigning function
  \(\Delta: Q \times \Sigma \to K[Q]\) with
  \(\alpha \in K[Q]\) as an initial weight polynomial and
  \(\beta: Q \to K\) as the terminal weighting function.
\end{definition}

While the definition of \(\Sigma\) remains the same as in the previous section,
for \(Q\), we will not populate it with any negations unless they also appear in
the specification \(\phi\).
The transition function \(\Delta\) needs to be redefined in terms of
polynomials over semirings, along with the initial polynomial and terminal
weights, \(\alpha\) and \(\beta\).
Moreover, the labeling function, \(\nu\) needs to be defined over the semiring
\(K\).

\paragraph*{Initial and Terminal Weights}

For a STREL specification \(\varphi\) and an \emph{ego} location \(l_0\), the
initial weight polynomial \(\alpha \in K[Q]\) is exactly the polynomial
\(q_\varphi^{l_0}\), i.e, \(\alpha\) is the polynomial with the initial state
corresponding to the ego location as the support and unitary coefficient
\(\top\).
The \emph{terminal weighting function}, \(\beta: Q \to K\), is a mapping such
that
\begin{displaymath}
  \beta(q) =
  \begin{cases}
    \top, & \text{~if~} q \in F, \\
    \bot, & \text{~otherwise.}
  \end{cases}
\end{displaymath}
These correspond to the initial states \(Q_0\) and final states \(F\)
respectively.

\begin{note}
  Similar to the Boolean case, the initial polynomial \(\alpha\) of the weighted
  automaton can vary depending on the choice of the ego location.
  Specifically, the initial polynomial is always the variable for state
  \(q_\phi^{l_0}\), where \(\phi\) is the specification being monitored and
  \(l_0\) is the ego location.
\end{note}

\paragraph*{Labeling function \(\nu\)}

Depending on the choice of algebra for \(K\), let \(\nu: \Se_W^L \times L \times
A \to K\) be a mapping of an atomic predicate from the spatial domain to \(K\).
An example can be a notion of a ``distance'' that a location \(l\) needs to move
for it to satisfy the predicate \(\mu\).
For example, as in \cite{fainekos2009robustness} and \cite{donze2010robust},
this can be a signed distance such that if the value is greater than 0, the
predicate is satisfied, and negative if the predicate is not, and can be used
with the min-max algebra.

\paragraph*{Transition relation \(\Delta\)}

We redefine the Boolean transition relation over an arbitrary semiring
polynomial \(K[Q]\).
Specifically, for some \(\Sc \in \Sigma\), we define the transition relation
\(\Delta: Q \times \Sigma \to K[Q]\) recursively on states
corresponding to subformulas in \(\Phi_\varphi\) as follows:
\begin{equation}
  \renewcommand{\arraystretch}{2.2}
  \begin{array}{c@{\ =\ }l}
    \Delta(q_\top^l, \cdot)                         & \top                                                  \\
    \Delta(q_\bot^l, \cdot)                         & \bot                                                  \\
    \Delta(q_\mu^l, \Sc)                            & \nu(\Sc, l, \mu)
    \\
    \Delta(q_{\neg \phi}^l, \Sc)                    & \ominus \Delta(q_\phi^l, \Sc)                         \\
    \Delta(q_{\phi \land \psi}^l, \Sc)              & \Delta(q_\phi^{l}, \Sc) \otimes \Delta(q_\psi^l, \Sc) \\
    \Delta(q_{\TNext\phi}^l, \Sc)                   & q_\phi^l                                              \\
    \Delta(q_{\phi \Until \psi}^l, \Sc)             & \Delta(q_\psi^l,\Sc) \oplus \left(
    \Delta(q_\phi^l, \Sc) \otimes q_{\phi \Until \psi}^l\right)                                             \\
    \Delta(q_{\phi \Reach^f_{[d1,d2]} \psi}^l, \Sc) &                                                       \\
    \multicolumn{2}{c}{%
      \bigoplus\limits_{\tau \in Routes(\Sc, l)}
      \bigoplus\limits_{i: \left(d_\tau^f[i]\in [d_1,d_2]\right)}
      \left( \Delta(q_\psi^{\tau[i]}, \Sc) \otimes \bigotimes\limits_{j < i} \Delta(q_{\phi}^{\tau[j]},\Sc) \right)
    }                                                                                                       \\
    \Delta(q_{\Escape^f_{\leq d} \psi}^l, \Sc)      &                                                       \\
    \multicolumn{2}{c}{%
    \bigoplus\limits_{\tau \in Routes(\Sc_t,l)}
    \bigoplus\limits_{\substack{l' \in \tau                                                                 \\
    d^f_{\Sc_t}[l,l'] \in [d_1, d_2]}}
    \bigotimes\limits_{\substack{i \leq j                                                                   \\
    \tau[j] = l'}}
    \Delta(q_\psi^{\tau[i]}, \Sc)
    }
  \end{array}
\end{equation}

\paragraph*{Run of an alternating weighted automaton}

Similar to a standard alternating automaton, the run of an alternating weighted
automaton is also a tree.
But, unlike the Boolean version, where the branches of the tree terminate in
Boolean values, determining the satisfaction value of a run, the branches of
a weighted run end in leaves taking value in a given algebra \(K\).
The \emph{weight} \(w_\Ac(\sigma)\) of a run induced by a trace \(\sigma\) on the automaton \(\Ac\) is the sum over paths in the run tree with the semiring \(K\), followed by a product over the value of each branch.

% \begin{todocomment}
% We should give a mathy definition of the run if time permits
% \end{todocomment}

% \begin{todocomment}
%   Need to write.
% \end{todocomment}

% !TEX root = ../main.tex

\section{Automata-based Monitoring for STREL}

Let the states in \(Q\) be consistently indexed such that \(q_1, \ldots,
q_{\abs{Q}} \in Q\).

As an alternating (weighted) automaton is characterized by polynomials over De
Morgan polynomials \(K[Q]\), we can see that to keep track of the current state
of the run of the automaton, it is sufficient to keep track of a single
polynomial.
Specifically, let \(\theta \in K[X]\) be the current state of the automaton.

At the start of the execution, for the specification \(\phi\) and an ego
location \(l_0\), \(\theta\) can be initialized to the state \(\alpha\)
corresponding to the given ego location \(l_0\).
Then for subsequent input symbol \(\Sc \in \Sigma\), we can compute the next
state of the automaton \(\theta'\), where
\begin{displaymath}
  \theta' \gets \theta\of*{\Delta(q_1, \Sc), \ldots, \Delta(q_{\abs{Q}}, \Sc)}.
\end{displaymath}
That is, for each state variable \(q\) in the support of \(\theta\), we evaluate
the transition function and substitute it into \(\theta\), producing a new
polynomial \(\theta' \in K[Q]\).

Then, to evaluate the satisfaction value or weight of a run at a given state
\(\theta\) at time \(t\), we simply use the final mapping \(\beta\) such that
\begin{displaymath}
  \rho_{l_0}(\sigma_0\cdots\sigma_t, \phi, 0) = \theta(\beta(q_1), \ldots, \beta(q_{\abs{Q}})).
\end{displaymath}
This method can be used both for online monitoring by keeping track of the
current state \(\theta\) and evaluating with \(\beta\) as needed, or in an
offline setting, by evaluating with \(\beta\) at the end of the trace.

\paragraph{Complexity of Monitoring Spatial Operators}

Notice that in the transition definition for nodes corresponding to \emph{Reach}
and \emph{Escape} operations, the \(Routes(\cdot, l)\) searches over all
possible routes originating from location \(l\).
In general, this can lead to an explosion in the search space.
Indeed, simply the problem of enumerating all paths between two locations
\(s,t\) in a graph is known to be \(\#P\)-complete \cite{valiant1979complexity}.

In \cite{nenzi2022logic}, the authors overcome this for offline monitoring as
they compute the satisfaction value for each subformula before evaluating the
spatial operators.
In our case, since we are consuming input \(\Sc \in \Se_W^L\) for every time
step, in general, we cannot realize whether a specific subformula for \(\Delta\)
holds true at a location before the end of the trace.

To this end, we adopt a depth-limited search algorithm to
enumerate paths from an ego location \(l\) \cite{agarwal2000depth}.
Here, the goal is to repurpose the standard depth-first search algorithm to enumerate each path such that we avoid repeating locations in a path by marking them visited, and then removing the mark just before popping from the stack.
% Moreover, we keep track of the length of the current path and only explore further if the distance threshold is met.

\begin{proposition}
  \label{lemma:depth-lim-reach}
  Given a spatial model \(\Sc = (L, E) \in \Se_W^L\), a distance function \(f:
  W \to D\), and an interval \([d_1, d_2]\), we define \(d_{min} = \min
  \Set*{f(w) \given (l,w,l) \in E }\) and \(k = \min\Set*{i \given i * d_{min} >
    d_2}\)
  (where \(i * d_{min}\) indicates the sum of \(i\) copies of \(d_{min}\).
  Let \(b = \max_l {(l, w, l') \in E}\) be the maximum node degree in \(\Sc\).
  Then, the complexity to enumerate the list of simple paths from a location
  \(l\) is in \(O(b^{d_{min}})\).
\end{proposition}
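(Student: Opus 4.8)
The plan is to bound the running time of the depth-limited depth-first search directly through the size of the search tree it explores, i.e., through its branching factor and its maximum depth. The branching factor is immediate: since every location has out-degree at most \(b\), each node of the DFS tree has at most \(b\) children. The work therefore reduces to showing that the search never needs to descend past a bounded depth, and that this depth is exactly the quantity \(k\) derived from \(d_{min}\) and \(d_2\).

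First I would establish the depth bound. Let \(\tau\) be any path with \(m\) edges. Since \(d_{min}\) is the minimum edge weight and the path distance \(d^f(\tau)\) is the \(+_D\)-sum of the weights of the edges of \(\tau\), monotonicity of \(+_D\) with respect to the ordering gives \(m * d_{min} \leq d^f(\tau)\), where \(m * d_{min}\) denotes the sum of \(m\) copies of \(d_{min}\). By the definition of \(k\) as the least \(i\) with \(i * d_{min} > d_2\), any path of length at least \(k\) satisfies \(d^f(\tau) > d_2\) and hence violates the constraint \(d^f_\tau[i] \in [d_1, d_2]\). Consequently the search may prune every branch once it reaches depth \(k\), so only prefixes of length strictly less than \(k\) are ever expanded.

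Next I would count the expanded prefixes. A tree of branching factor at most \(b\) and depth at most \(k\) contains at most \(\sum_{i=0}^{k} b^i\) nodes, which is in \(O(b^k)\), and each such node corresponds to a distinct path prefix explored by the DFS. Since every step of the search performs only constant bookkeeping (marking a location visited, accumulating and testing the distance against \([d_1, d_2]\), and unmarking on backtrack), the total running time is proportional to the number of expanded prefixes, yielding the claimed bound. The simple-path restriction only removes branches, so it can only decrease this count and does not affect the upper bound.

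The step I expect to be the main obstacle is justifying that the depth bound \(k\) is well defined and finite, which is precisely where the hypotheses must be used. It requires that \(d_{min}\) strictly exceed the additive identity \(\bot_D\) (a strictly positive smallest edge weight) together with monotonicity of the monoid \(+_D\), so that the accumulated distance grows without bound as the path lengthens; were \(d_{min}\) equal to \(\bot_D\), no finite \(k\) would exist and the bound would collapse. I would also remark that the quantity controlling the exponent is this derived depth limit \(k\) rather than \(d_{min}\) itself, consistent with the roles that \(d_{min}\) and \(d_2\) play in defining \(k\).
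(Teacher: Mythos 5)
The paper states this proposition without any proof, so there is nothing to compare against; your argument is the natural one and it is correct. The depth-limited DFS has branching factor at most \(b\) and, by your monotonicity argument (\(m\) edges force accumulated distance at least \(m * d_{min}\), which exceeds \(d_2\) once \(m \geq k\)), depth at most \(k\), giving \(O(b^k)\) expanded prefixes with constant work each. You are also right to flag the exponent: the stated bound \(O(b^{d_{min}})\) appears to be a typo for \(O(b^{k})\), which is what the paper's subsequent corollary actually uses, and your caveat that \(k\) is finite only when \(d_{min}\) strictly exceeds \(\bot_D\) is a genuine hypothesis the proposition leaves implicit.
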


\paragraph{Complexity of updating states}

When working with Boolean polynomials, a common data structure to use is Binary
Decision Diagrams~\cite{bryant1992symbolic}, or Algebraic Decision Diagrams in
the context of finite domains \cite{bahar1997algebric}.
It is a well-established result that reduced-ordered binary decision diagrams
(ROBDD) are canonical representations of arbitrary Boolean expressions and that
the computing a \emph{restriction} (evaluating the expression with Boolean
values for each variable) or a \emph{composition} (substituting a variable with
another expression) can be done in time linear in the size of each expression.

Similar work has been done in the context of representing arbitrary semiring
polynomials as \emph{decision diagrams}
\cite{wilson2005decision,fargier2013semiring}, where polynomial operations like
restriction and composition are linear in the size of the polynomials.

\begin{lemma}
  \label{lemma:delta-poly}
  For any update to the current state \(\theta \in K[Q]\) of an alternating
  automaton \(\Ac = (\Sigma, Q, \Delta, \alpha, \beta)\) with state variables
  \(Q\), where \(K\) is a De Morgan algebra, the runtime for evaluating the next
  state is in \(O(\abs{Q}\abs{\theta})\).
  Moreover, evaluating the final weight \(\theta(\beta)\) of a run from a given state \(\theta \in K[Q]\) can be done in \(O(\abs{\theta})\) time.
\end{lemma}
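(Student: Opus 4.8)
The plan is to reduce both running-time claims to the standard cost of two primitive operations on the decision-diagram representation of a De Morgan (idempotent semiring) polynomial: \emph{restriction} (evaluating at a fixed assignment) and \emph{composition} (substituting a variable by another polynomial). The excerpt already observes that over a bounded distributive lattice every indeterminate satisfies $x^k = x$, so each $\theta \in K[Q]$ is multilinear, of degree at most $1$ in each variable; this is exactly the regime in which the semiring decision diagrams of Wilson and Fargier support restriction and composition in time linear in the size of the operands. I would fix such a representation of $\theta$ up front and phrase both the state update and the final-weight evaluation purely in terms of these two primitives.

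For the second claim, the final weight $\theta(\beta(q_1),\ldots,\beta(q_{\abs{Q}}))$ is nothing more than a full restriction: every indeterminate $q_i$ is simultaneously assigned the constant $\beta(q_i)\in K$, and the polynomial collapses to a single element of $K$. Evaluating a decision diagram at a complete assignment is a single root-to-leaf traversal, so this costs $O(\abs{\theta})$, giving the stated bound directly.

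For the first claim, I would realize the substitution $\theta' \gets \theta(\Delta(q_1,\Sc),\ldots,\Delta(q_{\abs{Q}},\Sc))$ as a sequence of single-variable compositions, replacing one support variable $q_i$ by its transition polynomial $\Delta(q_i,\Sc)$ at a time; here I treat each $\Delta(q_i,\Sc)$ as already materialized, since the cost of building it (for \emph{Reach} and \emph{Escape} this is the route enumeration analyzed in the preceding proposition) is separate from the polynomial update itself. Only variables in the support of $\theta$ matter, and there are at most $\abs{Q}$ of them; by the cited linearity of composition each such step runs in time linear in the current diagram, i.e.\ $O(\abs{\theta})$, and summing over the at most $\abs{Q}$ substitutions yields the claimed $O(\abs{Q}\,\abs{\theta})$.

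The step I expect to be the main obstacle is controlling the size of the intermediate polynomial across the successive compositions, since in general substitution can enlarge a decision diagram and a naive accounting would multiply the growing sizes rather than a fixed $\abs{\theta}$. Here I would lean on idempotency: because the lattice structure forces degree at most $1$, substituting a variable cannot create higher powers, and the multilinear normal form keeps each composed diagram within a constant factor of $\abs{\theta}$, so every step stays $O(\abs{\theta})$. A secondary point to nail down is that the cited linear restriction and composition results, stated for ROBDDs and their algebraic analogues, transfer to the semiring-polynomial decision diagrams used here; I would appeal to the Wilson and Fargier constructions to close that gap, noting that the De Morgan negation $\ominus$ appearing in $\Delta$ is itself a linear-time relabeling of leaves and therefore does not affect the asymptotics.
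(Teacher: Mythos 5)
The paper gives no explicit proof of this lemma: it is justified entirely by the paragraph that precedes it, which appeals to linear-time \emph{restriction} and \emph{composition} on (semiring) decision diagrams \cite{bryant1992symbolic,wilson2005decision,fargier2013semiring}. Your proposal takes exactly that route --- the final weight as a full restriction costing \(O(\abs{\theta})\), and the state update as at most \(\abs{Q}\) single-variable compositions each costing \(O(\abs{\theta})\) --- so in approach you match what the paper intends.

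The genuine problem is the step you yourself flag as the main obstacle, because your resolution of it does not hold. Idempotency gives multilinearity (no variable appears with exponent above \(1\)) and collapses duplicate monomials, but it does not keep the composed polynomial within a constant factor of \(\abs{\theta}\): substitution can multiply the number of \emph{distinct} monomials. For instance, if \(\theta = q_1 \otimes q_2 \otimes \cdots \otimes q_n\) is a single term and each \(\Delta(q_i,\Sc)\) is a sum of two monomials with disjoint supports, the fully composed polynomial has up to \(2^n\) terms, none of which idempotency can merge; the same blow-up afflicts decision-diagram composition, which is not linear-time in general. So the invariant ``each intermediate diagram stays \(O(\abs{\theta})\)'' fails, and with it the claimed \(O(\abs{Q}\abs{\theta})\) bound when \(\abs{\theta}\) is read as the size of the \emph{incoming} state. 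The reading consistent with how the paper actually uses the lemma is weaker: by Lemma~\ref{lemma:demorgan-poly} every De Morgan polynomial over \(Q\) has at most \(2^{\abs{Q}}\) terms, so all intermediate and output polynomials are capped at that size, and the Corollary accordingly charges \(O(\abs{Q}\cdot 2^{\abs{Q}})\) per step. If you restate your per-composition cost against this uniform cap (or define \(\abs{\theta}\) as the maximum polynomial size encountered during the update), your argument goes through; as written, the constant-factor claim is false.
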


\begin{lemma}
  \label{lemma:demorgan-poly}
  For a polynomial \(\theta \in K[X]\), where \(K\) is a De Morgan algebra and
  \(X\) is a finite set of variables, the number of terms in \(\theta\) (denoted
  \(\abs{\theta}\)) is at most \(2^{\abs{X}}\).
\end{lemma}
\begin{proof}
  De Morgan algebras have the property of the multiplication and addition
  monoids begin idempotent.
  Thus, as noted earlier, polynomials over De Morgan algebras have the property
  that the degree for any variable in the polynomial is at most 1.
  So, for a set of variables \(X\), there are at most \(2^{\abs{X}}\)
  permutations of exponents for each term with values \(0\) or \(1\).
  Thus, for any De Morgan polynomial \(\theta \in K[X]\), the number of terms in
  \(\theta\) are at most \(2^{\abs{X}}\).
\end{proof}

% \begin{todocomment}
%   \begin{itemize}
%     \item For De Morgan polynomials, the size of the current polynomial is in
%           \(O(2^{\abs{Q}})\) and evaluating \(\theta\) is polynomial in the size of
%           each term (so, worst-case \(O(\abs{Q} 2^{\abs{Q}}\)).
%   \end{itemize}
% \end{todocomment}

Thus, from the \Cref{lemma:strel-spltl}, \Cref{lemma:depth-lim-reach}, \Cref{lemma:delta-poly} and
\Cref{lemma:demorgan-poly}, we can infer the following:

\begin{corollary}
  For a trace \(\sigma\) of length \(\abs{\sigma}\), computing the weight of the
  trace with respect to an alternating automaton has a worst-case complexity of
  \(O(\abs{\sigma} \abs{Q} 2^{\abs{Q}} b^{k})\), where \(b\) and \(k\) are as
  described in \Cref{lemma:depth-lim-reach}.
\end{corollary}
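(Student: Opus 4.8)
The plan is to treat the corollary as a bookkeeping argument that composes \Cref{lemma:strel-spltl}, \Cref{lemma:depth-lim-reach}, \Cref{lemma:delta-poly}, and \Cref{lemma:demorgan-poly} along the online monitoring loop described above. First I would fix the object whose cost is being bounded: starting from the initial polynomial \(\alpha\), we read the trace \(\sigma = \sigma_0 \cdots \sigma_{\abs{\sigma}-1}\) one symbol at a time, and at each step we replace the current state polynomial \(\theta \in K[Q]\) by \(\theta' \gets \theta(\Delta(q_1, \Sc), \ldots, \Delta(q_{\abs{Q}}, \Sc))\). This contributes an outer factor of \(\abs{\sigma}\) for the number of iterations, so it suffices to bound a single update and then multiply by \(\abs{\sigma}\); the terminal evaluation \(\theta(\beta(q_1), \ldots, \beta(q_{\abs{Q}}))\) is performed only once and, by \Cref{lemma:delta-poly}, costs \(O(\abs{\theta})\), which will be dominated by the rest.

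Next I would bound the working polynomial and the substitution. By \Cref{lemma:demorgan-poly}, since \(K\) is a De Morgan algebra every variable has degree at most one, so \(\abs{\theta} \le 2^{\abs{Q}}\) at every step. \Cref{lemma:delta-poly} then bounds the composition producing \(\theta'\) by \(O(\abs{Q}\abs{\theta}) = O(\abs{Q} 2^{\abs{Q}})\), provided the individual transition polynomials \(\Delta(q_i, \Sc)\) are already available. Materialising those transitions is the remaining ingredient: for states corresponding to Boolean and temporal subformulas the transition is immediate from the recursive definition of \(\Delta\), but for states corresponding to \emph{Reach} and \emph{Escape} the definition ranges over \(Routes(\Sc, l)\), and by \Cref{lemma:depth-lim-reach} the relevant simple paths are enumerated in \(O(b^{k})\) time. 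Folding this factor into each per-variable substitution gives a per-step bound of \(O(\abs{Q} 2^{\abs{Q}} b^{k})\), and multiplying by the \(\abs{\sigma}\) iterations yields the claimed \(O(\abs{\sigma}\abs{Q} 2^{\abs{Q}} b^{k})\).

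The main obstacle I expect is justifying precisely how the spatial factor \(b^{k}\) combines with the polynomial-substitution factor. \Cref{lemma:delta-poly} bounds the substitution under the assumption that each transition value is already in hand, whereas for the spatial operators computing that value itself requires a route enumeration; I would therefore argue that in the worst case the \(O(b^{k})\) enumeration is nested inside the \(O(\abs{Q})\) per-variable composition, so that the two costs multiply rather than merely add. This justifies the stated product as a valid, if loose, upper bound in place of the sharper sum \(O(\abs{Q} b^{k} + \abs{Q} 2^{\abs{Q}})\). I would also note the minor notational point that \(k\) here is exactly the path-length exponent of \Cref{lemma:depth-lim-reach}, so no new parameter is introduced, and observe that \Cref{lemma:strel-spltl} underwrites the whole pipeline, guaranteeing that the automaton built from the original STREL formula is finite with \(\abs{Q}\) controlled by the size of the untimed expansion, so that every bound above is well-defined.
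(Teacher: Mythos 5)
Your proposal is correct and follows exactly the route the paper intends: the paper offers no explicit proof beyond the sentence citing \Cref{lemma:strel-spltl}, \Cref{lemma:depth-lim-reach}, \Cref{lemma:delta-poly} and \Cref{lemma:demorgan-poly}, and your write-up is precisely the composition of those four results along the per-step update loop (with the \(b^{k}\) route-enumeration folded into the transition evaluation and the product taken as a safe upper bound). Nothing further is needed.
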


% \begin{corollary}
%   For a trace \(\sigma\) of length \(\abs{\sigma}\), computing the weight of the trace
%   with respect to an alternating automaton has a worst-case complexity of
%   \(O(\abs{\sigma} \abs{Q} 2^{\abs{Q}})\).
% \end{corollary}

\begin{remark}
  While the worst-case complexity of an individual update can be exponential,
  packages like CUDD \cite{somenzi1997cudd} have highly optimized
  implementations of these operations for Boolean algebras.
  In our experiments, we build on these structures to monitor the run of an automaton.
  Moreover, the exponential runtime for enumerating paths for the spatial operators is not encountered in practice for general varieties of spatio-temporal specifications.
\end{remark}

% !TEX root = ../main.tex

\section{Case Study}

\begin{figure*}[htpb]
  \centering
  % First row of images
  \begin{subfigure}[t]{0.25\textwidth}
    \centering
    \includegraphics[width=\linewidth]{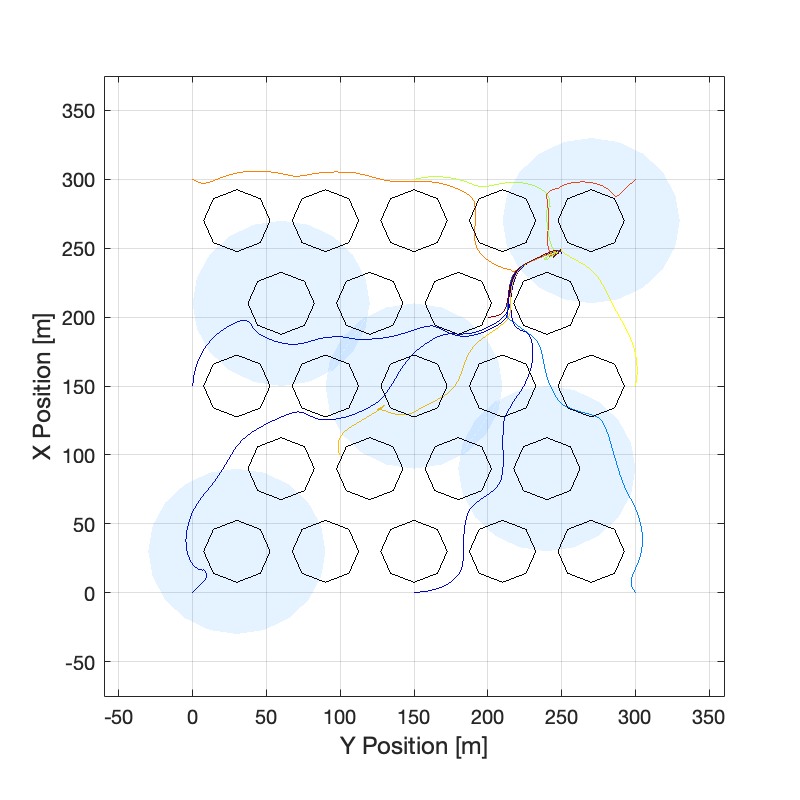}
    \caption{Map 1 trajectories with 10 agents, \\
    23 obstacles and 75\% obstacle density}
  \end{subfigure}
  % \hfill
  \hspace{1em}%
  \begin{subfigure}[t]{0.25\textwidth}
    \centering
    \includegraphics[width=\linewidth]{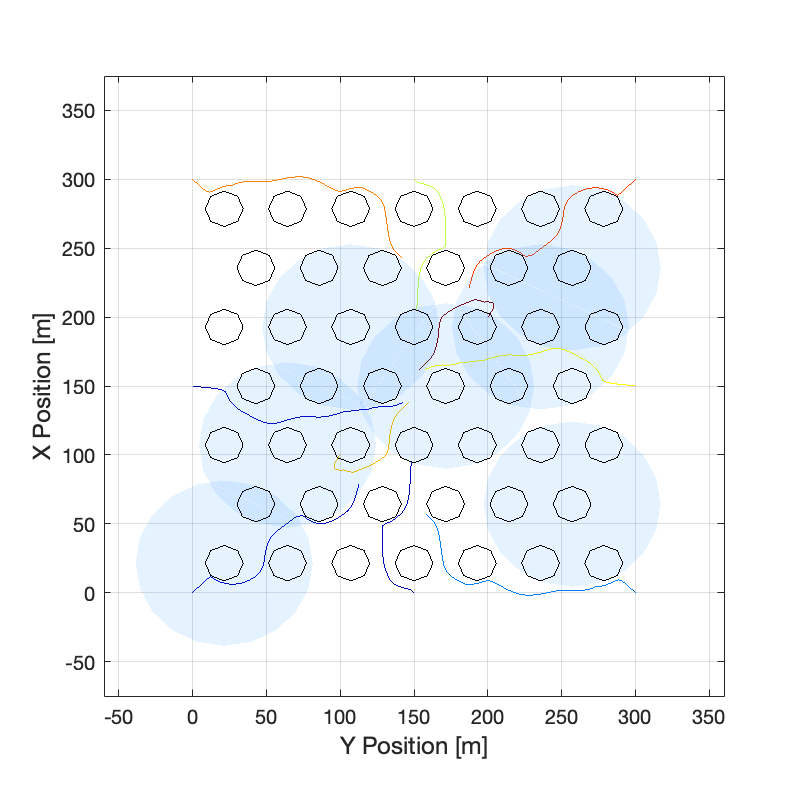}
    \caption{Map 2 trajectories with 10 agents, \\ 46 obstacles and 60\% obstacle density}
  \end{subfigure}
  % \hfill
  \hspace{1em}%
  \begin{subfigure}[t]{0.25\textwidth}
    \centering
    \includegraphics[width=\linewidth]{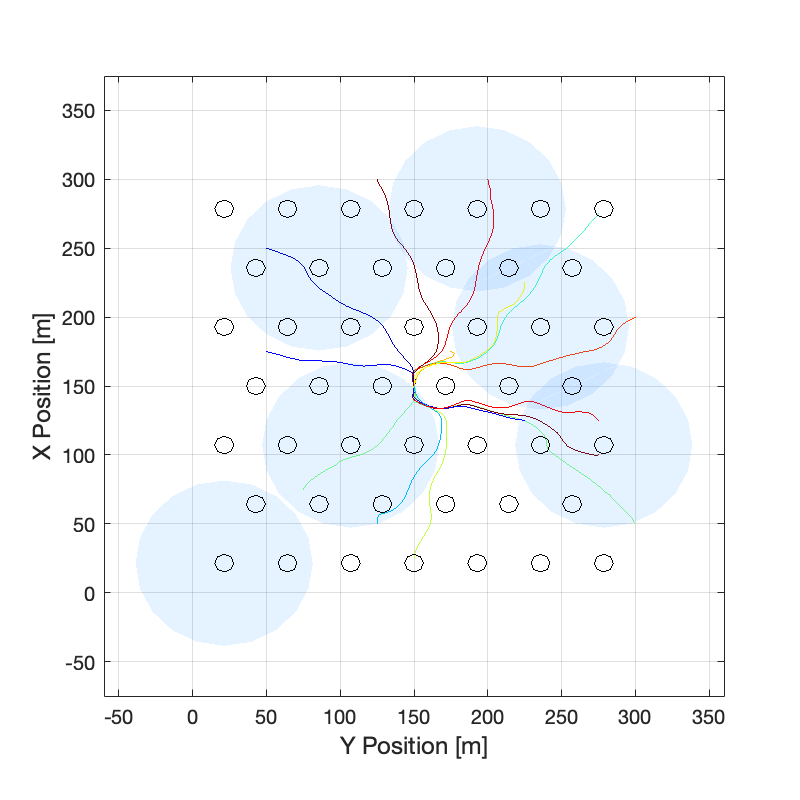}
    \caption{Map 3 trajectories with 15 agents, \\ 46 obstacles and 30\% obstacle density}
  \end{subfigure}
  % \vspace{1em} % Space between rows
  % Second row of images
  \begin{subfigure}[t]{0.25\textwidth}
    \centering
    \includegraphics[width=\linewidth]{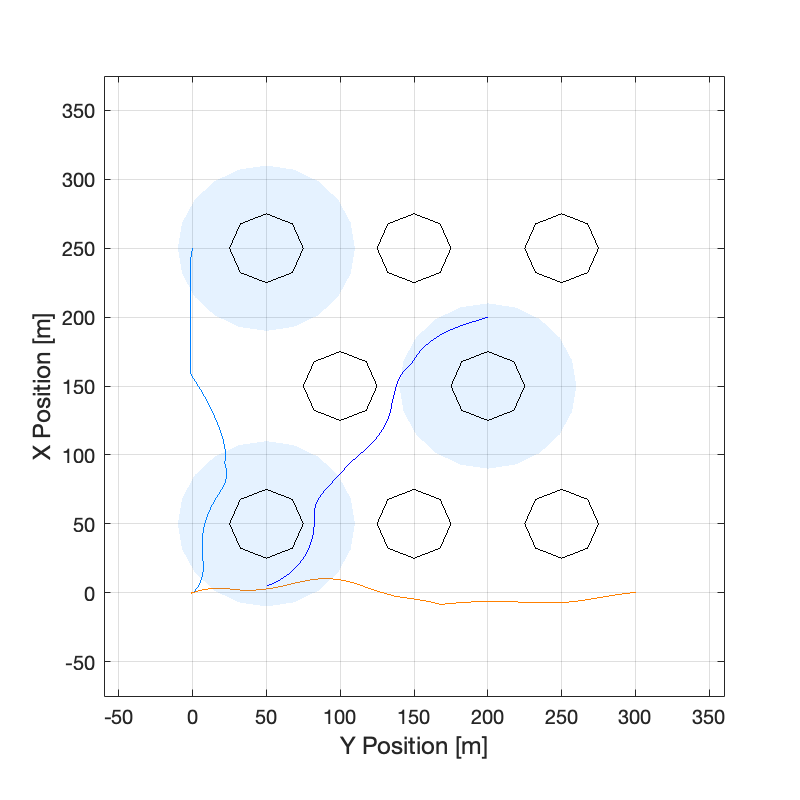}
    \caption{Map 4 trajectories with 3 agents, \\ 46 obstacles and 50\% obstacle density}
  \end{subfigure}
  % \hfill
  \hspace{1em}%
  \begin{subfigure}[t]{0.25\textwidth}
    \centering
    \includegraphics[width=\linewidth]{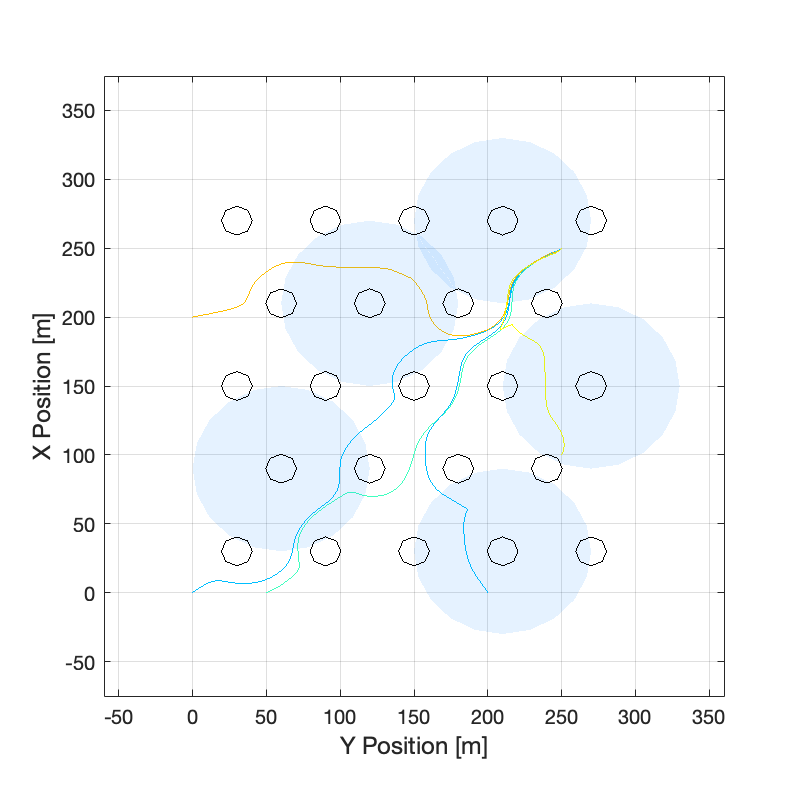}
    \caption{Map 5 trajectories with 5 agents, \\ 23 obstacles and 25\% obstacle density}
  \end{subfigure}
  % \hfill
  \hspace{1em}%
  \begin{subfigure}[t]{0.25\textwidth}
    \centering
    \includegraphics[width=\linewidth]{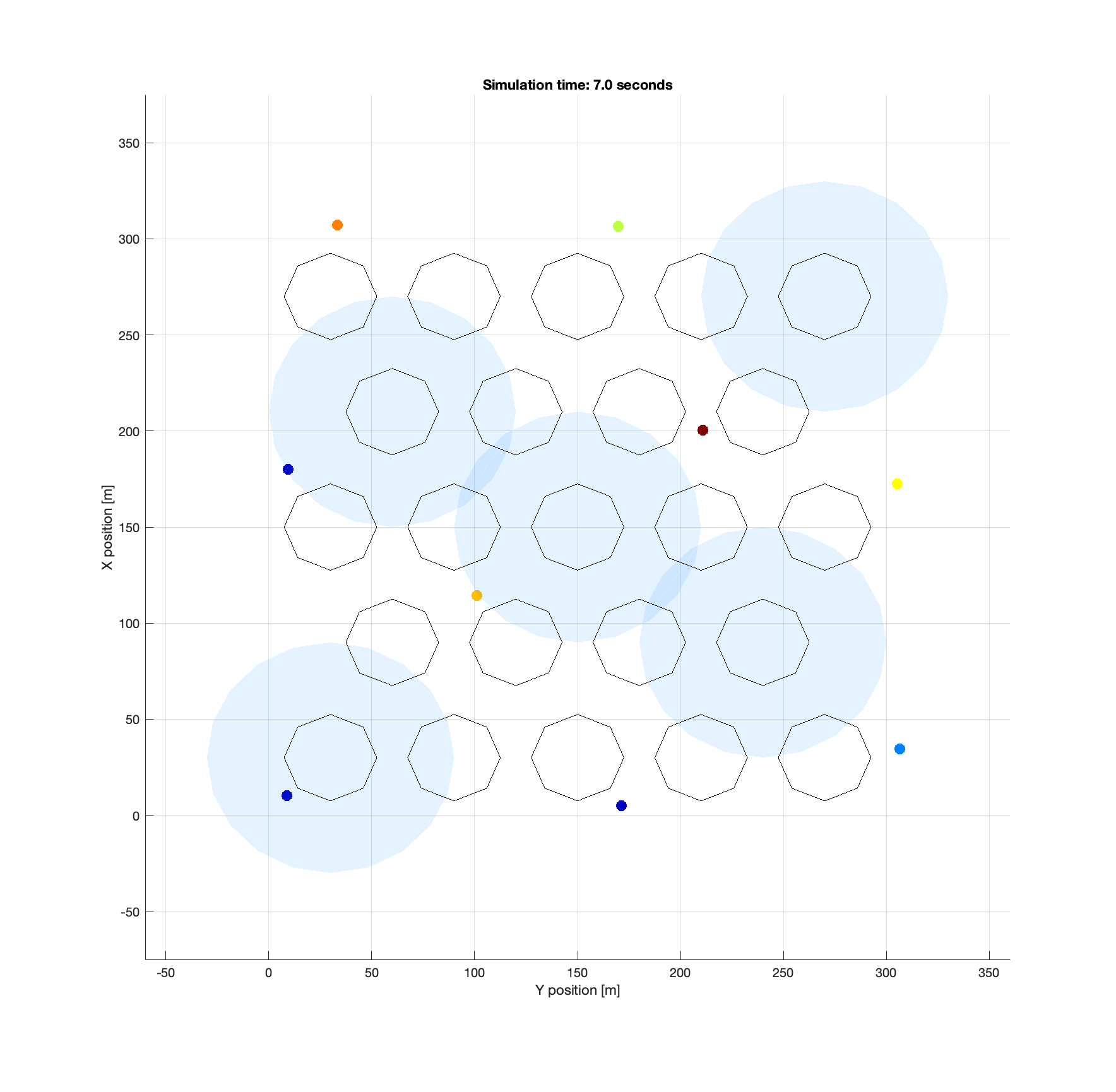}
    \caption{Map 1 with initial positions of drones}
  \end{subfigure}
  \caption{Experimental setup indicating different maps with varying number of obstacles, obstacle densities and number of agents.
    Communication radii of ground stations are indicated by the blue circles.
    \label{fig:maps}}
\end{figure*}

\begin{table*}[htpb]
  \centering
  \begin{tabularx}
    {.85\textwidth}{@{}rrccccYYY@{}}
    \toprule
                           &
                           & \multicolumn{4}{c}{Map configuration}
    % & No. of Drones
    % & No. of Ground stations
    % & No. of Obstacles
    % & Obstacle Density
                           & \multirow[t]{2}{\hsize}{\centering Trace Length~}
                           & \multirow[t]{2}{\hsize}{\centering Seconds per~Trace}
                           & \multirow[t]{2}{\hsize}{\centering Seconds per~Step}
    \\
                           &
                           & \# Drones
                           & \# Stations
                           & \# Obstacles
                           & Obstacle Density
    % & Trace Length
    % & Offline Monitoring
    % & Online Monitoring
    \\
    %map1
    \midrule
    \multirow{2}{*}{Map 1} & $\phi_1$                                              & 10 & 5 & 23 & 0.75 & 6001 & 9.487  & 1.06e-4 \\
                           & $\phi_2$                                              & 10 & 5 & 23 & 0.75 & 6001 & 5.965  & 6.26e-5 \\
    %map2                                                
    \midrule
    \multirow{2}{*}{Map 2} & $\phi_1$                                              & 10 & 7 & 46 & 0.6  & 6001 & 13.231 & 1.31e-4 \\
                           & $\phi_2$                                              & 10 & 7 & 46 & 0.6  & 6001 & 10.761 & 1.05e-4 \\
    %map3                                                
    \midrule
    \multirow{2}{*}{Map 3} & $\phi_1$                                              & 15 & 6 & 46 & 0.3  & 8001 & 18.889 & 1.21e-4 \\
                           & $\phi_2$                                              & 15 & 6 & 46 & 0.3  & 8001 & 49.647 & 2.95e-4 \\
    %map4                                                
    \midrule
    \multirow{2}{*}{Map 4} & $\phi_1$                                              & 3  & 3 & 8  & 0.5  & 6001 & 2.996  & 8.14e-5 \\
                           & $\phi_2$                                              & 3  & 3 & 8  & 0.5  & 6001 & 0.1139 & 3.25e-6 \\
    %map5                                                
    \midrule
    \multirow{2}{*}{Map 5} & $\phi_1$                                              & 5  & 5 & 23 & 0.25 & 7501 & 6.891  & 9.09e-5 \\
                           & $\phi_2$                                              & 5  & 5 & 23 & 0.25 & 7501 & 2.542  & 9.14e-5 \\
    \bottomrule
  \end{tabularx}
  \caption{Results Table}
  \label{tab:results_table}
\end{table*}
To demonstrate our framework, we monitor the execution of a simulation of drone
swarms that are coordinating to flock towards a common goal, while avoiding
obstacles in their path.
Each drone in the swarm can communicate within a certain proximity to other
drones and to stationary ground stations.
This setup allows us to generate diverse set of drone trajectories, with various
obstacle and ground station configurations, while monitoring STREL properties
over their execution.

\paragraph{Experimental Setup}

For our experiments, we use the Swarmlab simulator \cite{soria2020swarmlab} to
generate trajectories for drone swarms under various configurations.
In the simulation, each drone must reach a goal area from some initial location
while avoiding obstacles and passing through regions where they can communicate
with a ground station.
The trajectories are sampled at a \(10ms\) period.

We vary experimental parameters such as obstacle density, number of drones,
and number and positions of ground stations.
The various maps with their configurations are shown in \Cref{fig:maps}.

\paragraph{Spatial Model in Traces}
The space universe \(L\) of the system is the set of drones and ground stations,
each annotated with a unique identifier and position in 3D (although the drones
travel almost along the same altitude, thus the vertical dimension can be
discarded).
An edge is formed between locations \(l, l'\) if they are within communication
reach of each other --- \(40m\) if either of the locations is a ground station,
\(30m\) otherwise.

\paragraph{Specifications}
Here, we look at two specifications over the traces of the spatial models.
These specifications are designed to ensure that certain behavioral conditions
are met by the swarm throughout the simulation.
In particular, we focus on specifications that impose connectivity constraints
on each drone --- such as being in proximity to a ground station or other drones
that are within the proximity of one.

\textbf{Establish Communication \(\phi_1\)}:
Ensures that each drone can maintain connectivity with either the swarm of
drones or a ground station within a specific distance.
If a drone is isolated from the rest of the swarm it must reconnect within 100
time steps (or 1 second) either with another drone that is within 2 hops from a
drone or a ground station.
This specification is intended to monitor the robustness of the swarm's
communication network and ability to dynamically reconfigure in response to
disruptions.
\begin{displaymath}
  \phi_1 = \Alw\left(
  \diamonddiamond_{[1,2]}^{hops} \textsf{drone}
  \lor
  \Ev_{[0, 100]} \diamonddiamond_{[1,2]}^{hops} \left(\textsf{drone} \lor
  \textsf{groundstation}\right)
  \right)
\end{displaymath}

\textbf{Reach Avoid \(\phi_2\)}:
This ensures that drones must avoid obstacles throughout their trajectories and
until it reaches the goal, it must remain in communication with a ground station
within two hops.
This specification is designed to assess navigational safety in obstacle avoidance and connectivity reliability with ground stations as the drones move along their trajectories to their goal positions.
\begin{displaymath}
  \phi_2 =
  \Alw \neg \textsf{obstacle} \land \left(\left( \textsf{drone} \Reach^{hops}_{[0,2]}
    \textsf{groundstation}\right) \Until \textsf{goal}\right)
\end{displaymath}

The predicates \(\textsf{drone}\) and \(\textsf{groundstation}\) are enabled at
a location if the location corresponds to a drone or ground station respectively.

In the specifications, we refer to the distance function \(hops\) that labels
each edge in a spatial model with the weight \(1 \in \Ne^\infty\), and the
distance domain is the counting domain, i.e., \((\Ne^\infty, 0, \infty, +,
\leq)\).
This distance function applied over a path in the spatial model essentially
captures the number of ``hops'' some message would make for an agent at the
start location to communicate to the agent at the other end of the path.

\paragraph{Reporting results}
For each map configuration and specification, we perform both, online monitoring
and offline monitoring.
For the former, we report the average time taken for each step across all drones
as ego locations.
In the latter, we report the average total time taken for monitoring an entire
trajectory for all drones.
To mitigate timing outliers due to operating system background processes, we ran
10 monitoring trials across 7 batches and compute the average time across the 7
best trials.
See \Cref{tab:results_table} for the details.

\subsection{Results and Analysis}
Our experiments were conducted on a laptop with an M3 chip featuring an 8-core
CPU, 10-core GPU and a 16-core Neural engine.
The framework operates at a sampling rate of 10 milliseconds per step which
allows us to capture fine-grained behaviors.
For each configuration we observe that the mointoring time is influenced by the
obstacle density, number of drones and trace length.
As expected, configurations with fewer drones and obstacles such as Maps 4 and 5
have significantly lower monitoring times.
Configurations with higher obstacle densities and larger number of drones (Map
3) result in increased monitoring times, especially for $\phi_2$.
However, even despite the theoretical complexities of monitoring over multiple
agents and a high number of obstacles, the actual computation times per step are
not affected, as seen in \autoref{tab:results_table} in the seconds-per-step
metric.
This indicates that our framework can handle a large number of spatial
locations, agents and map configurations without significant slowdown.
% \begin{todocomment}
%   \begin{itemize}
%     \item Use Case Study from older papers
%     \item Use Nick's example
%   \end{itemize}
% \end{todocomment}

% !TEX root = ../main.tex

\section{Related Work}

In the work by \cite{borzoo1}, the authors propose distributed monitoring for an STL formula over the global state of a distributed system.
For a truly distributed system, estimating the global state is a hard problem due to asynchronous communication and lack of a central clock.
The authors show that STL formulas can be monitored under the assumption of bounded clock drift using SMT solvers.
In our approach, we do not consider issues arising from clock asynchrony, and assume a central clock.
It would be interesting to extend our approach with assumptions similar to the above work.

The work by \cite{matos} introduces a method for monitoring continuous-time, continuous-valued signals in CPS by compensating for clock drifts, enabling fast, distributing monitoring suitable for real-time deployment in autonomous systems, while the authors of \cite{momtaz2023predicate} work on distributed predicate detection using the notion of consistent cuts extended to continuous-time settings.
STL has also been extended to applications in monitoring agent-based epidemic
spread models \cite{bortolussi2014specifying}.
Similar to our work, \cite{audrito} focus on integration spatial and temporal logic in distributed monitoring of swarms.
The work \cite{li2021runtime} presents STSL, a spatio-temporal specification language combining temporal and spatial logic $\mathcal{S}4_u$ for CPS, with applications in runtime verification, falsification, and parameter synthesis.
TQTL has been introduced in \cite{anand1} to assess spatio-temporal properties of perception algorithms in autonomous systems allowing quality evaluation even without ground truth labels.
Spatio-Temporal Perception Logic (SPTL) \cite{hekmatnejad2024formalizing,balakrishnan2021percemona} has also been introduced to evaluate perception systems in autonomous vehicles providing spatial-temporal reasoning capabilities and enabling offline performance checks without ground-truth data.
Recently, team Spatio-Temporal Reach and Escape Logic (tSTREL) \cite{liu2025tstrel} has been introduced as an extension of STREL with the possibility to evaluate reach and escape operators over a set of locations, which provides more flexibility in expressing multi-agent system properties.

% !TEX root = ../main.tex

\section{Discussions}
\subsection{Limitations}

In this paper, we currently only handle monitoring for finite-length traces, where the input word is some \(\sigma \in \Sigma^*\).
To support infinite-length traces, we would require the alternating word automaton to have Buchi acceptance conditions \cite{vardi1995alternating}.
In general, while our construction works for translating STREL specifications into such automata, we will explore the semantics and applications for them in future work.

Furthermore, the semantics for STREL in \cite{bartocci2017monitoring} and \cite{nenzi2022logic} are defined for continuous signals from multiple locations.
Handling this would require extending our construction to \emph{alternating timed automata}.

In our construction, we also consider only spatial models with finite, a priori known locations and do not handle edge multiplicities, and will investigate efficient means to do so in the future.
For example, for the former, we can potentially use a symbolic variable for the locations and realize them as we see new locations in the input word.
\subsection{Conclusion}
In this work, we address the growing need for the robust monitoring of spatially distributed cyber-physical systems by leveraging STREL. We introduce a novel, linear-time construction of alternating finite automata (AFA) from STREL specifications. Additionally we presented an extension to weighted AFA for quantitative monitoring. Our contributions allow for efficient and scalable, offline and online monitoring as demonstrated by our Swarmlab experiments. By analyzing drone trajectories in various map configurations, we validated our framework's capability to enforce reach-avoid, and connectivity-maintenance specifications on drones that attempt to navigate their way to the goal in complex environments. To summarize, we highlight the utility of automaton-based monitoring for scalable and dynamic applications of CPS and its efficiency in our experiments indicates promising applications in real-world distributed systems.

\begin{acks}
  This work was partially supported by the \grantsponsor{nsf}{National Science
    Foundation}{https://www.nsf.gov/} through the following grants:
  \grantnum{nsf}{CAREER award (SHF-2048094)},
  \grantnum{nsf}{CNS-1932620},
  \grantnum{nsf}{CNS-2039087},
  \grantnum{nsf}{FMitF-1837131},
  \grantnum{nsf}{CCF-SHF-1932620},
  \grantnum{nsf}{IIS-SLES-2417075};
  funding by
  \grantsponsor{toyota}{Toyota R\&D}{https://amrd.toyota.com/} and
  \grantsponsor{siemens}{Siemens Corporate Research}{https://www.siemens.com/global/en/company/innovation/research-development.html}
  through the USC Center for Autonomy and AI;
  a grant from
  \grantsponsor{fort}{Ford Motors}{https://corporate.ford.com/home.html};
  an \grantsponsor{amazon}{Amazon}{https://www.amazon.science} \grantnum{amazon}{Faculty Research Award};
  and the
  \grantsponsor{airbus}{Airbus Institute for Engineering Research}{https://viterbischool.usc.edu/aier/}.
  This study was carried out within the PNRR research activities of the
  consortium iNEST (Interconnected North-Est Innovation Ecosystem) funded by the
  \grantsponsor{eu}{European Union Next-GenerationEU}{https://next-generation-eu.europa.eu}
  (\grantnum{eu}{PNRR – Missione 4 Componente 2, Investimento 1.5 – D.D.
    1058 23/06/2022, ECS\_00000043})
  and
  the \grantnum{eu}{PRIN project 20228FT78M DREAM}
  (modular software design to reduce uncertainty in ethics-based cyber-physical systems).
  This work does not reflect the views or positions of any organization listed.
\end{acks}

% \newpage
\bibliographystyle{ACM-Reference-Format}
\bibliography{bib}

\end{document}